\documentclass[a4paper]{amsart}
\usepackage[latin1]{inputenc}
\usepackage{amsmath}
\usepackage{amsfonts}
\usepackage{amssymb}
\usepackage{graphicx}
\usepackage{color}
\numberwithin{equation}{section}
\newtheorem{theorem}{Theorem}[section]
\newtheorem{lemma}[theorem]{Lemma}
\newtheorem{proposition}[theorem]{Proposition}
\newtheorem{corollary}[theorem]{Corollary}
\theoremstyle{definition}
\newtheorem{definition}[theorem]{Definition}
\newtheorem{example}[theorem]{Example}

\theoremstyle{remark}

\newcommand{\lbiprod}{{>\!\!\!\triangleleft\kern-.33em\cdot}}
\newcommand{\rbiprod}{{\cdot\kern-.33em\triangleright\!\!\!<}}

\makeatletter
\newcommand{\mathleft}{\@fleqntrue\@mathmargin0pt}
\newcommand{\mathcenter}{\@fleqnfalse}
\makeatother

\newcommand{\C}{{\Bbb C}}
\newcommand{\R}{{\Bbb R}}
\newcommand{\Z}{{\Bbb Z}}

\newcommand{\cg}{\mathfrak{g}}

\newcommand{\eps}{\epsilon}

\newcommand{\tens}{\otimes}
\newcommand{\id}{{\rm id}}
\newcommand{\extd}{{\rm d}}

\newcommand{\del}{{\partial}}

\allowdisplaybreaks

\title{Time-slicing quantum spacetimes}
\keywords{noncommutative geometry, quantum spacetime, quantum gravity, fuzzy sphere, foliation}

\subjclass[2000]{Primary 81R50, 58B32, 83C57}

\author{Shahn Majid}
\address{School of Mathematical Sciences\\ Queen Mary University of London \\ Mile End Rd, London E1 4NS }
\thanks{Work supported by a Leverhulme Trust project grant RPG-2024-177}
\email{s.majid@qmul.ac.uk}
\begin{document}

\begin{abstract} For quantum field theory on curved spacetimes, a critical role is played by their foliation into  spacelike time-slices at each value $t$ of a coordinate time, with corresponding metric in ADM form. We provide a general construction for the spacetime quantum Levi-Civita connection when each spatial slice is replaced by a quantum Riemannian geometry. This is then fully solved for a class of spatial algebras including fuzzy spheres and  for any time-dependent spatial quantum metric, shift 1-form and lapse function. The result takes a particularly simple form if the spatial metric evolves in time according to a first order ODE which, in the case of a fuzzy sphere, requires the spatial metric to rotate in time according to the value at each $t$ of the shift vector. As an application, our results  provide in principle fuzzy versions of most (pseudo)-Riemannian manifolds. We also fully solve the case of rotationally invariant spacetimes with angular directions replaced by a discrete circle, including a new $\Z_n$-FLRW model.\end{abstract}
\maketitle 

\section{Introduction}

In this work, we take a first step towards canonical quantum gravity in the ADM formalism\cite{ADM} adapted to quantum spacetimes. The thinking is that if, as widely considered plausible, quantum gravity effects render spacetime noncommutative, we should build quantum field theories on quantum spacetimes. This should include quantum gravity itself, with the ultimate hope of producing the pre-supposed quantum gravity effects as a self-consistency condition. While many quantum spacetime models are known, starting with\cite{MaRue, DFR,Hoo}, and also include lattices and graphs as examples (where the algebra of functions is commutative but functions do not commute with 1-forms in quantum Riemannian geometry (QRG)\cite{BegMa}), quantum field theory remains poorly understood. It is not hard to write down quantum-geometric Lagrangians and hence propose a functional integral quantisation, eg for quantum gravity on a square\cite{Ma:squ,BliMa}, but the full physical picture also needs a Hamiltonian quantisation, and this needs a foliation into time-slices that works in quantum  geometry. 

In classical geometry, at least locally, this is not considered an issue, but rather more of a gauge choice. Thus, in the ADM formalism, one assumes the metric on the spacetime $M$ can be put into the form\cite{ADM}
\begin{equation}\label{ADM} \cg=g_{ij}\extd x^i\tens \extd x^j + g_{ij}N^j(\extd x^i\tens \extd t+\extd t\tens \extd x^i)- (N^2-g_{ij}N^iN^j)\extd t\tens\extd t,\end{equation}
where we follow algebraic conventions in writing the tensor products (taken over $C^\infty(M)$). Here $x^i$ are coordinates in the time-slice at time $t$ 
and the corresponding points in $M$ are denoted $x^\mu(\vec x,t)$. The coefficients are functions of $x^i$ that also depend on $t$, with $g_{ij}$ the spatial metric on the slice, $N$ governing the lapse of proper time between nearby slices and $N^j$ governing the corresponding shift in spatial coordinates. In this work, we show how to replace the $x^i$ by a noncommutative spatial coordinate algebra $A$ with differential structure and a free choice of spatial quantum metric $\cg(t)$, shift 1-form $\xi(t)$ (corresponding to a shift vector with components $N^i$ via the spatial metric) and lapse function $n(t)$ in the role of $N^2(t)$ (we wont actually take its square-root as this might be an issue in the algebra $A$). There are different approaches to noncommutative geometry and we use the `quantum Riemannian geometry' (QRG) formalism as in \cite{BegMa} that grew out of experience with concrete models and quantum group symmetries, so is well suited for direct calculations. Then $\cg\in \Omega^1\tens_A\Omega^1$ and we assume we are given a quantum Levi-Civita connection (QLC)  $\hat\nabla:\Omega^1\to \Omega^1\tens_A\Omega^1$, i.e. torsion free and metric compatible for every $\cg$. The $\tens_A$ means we can move elements of $A$ freely across the tensor product. The equations for a QLC in the noncommutative case are nonlinear (quadratic) hence this may not exist or it may not be unique and the problem we solve is to find a spacetime  $\tilde\nabla$ which is a QLC for a spacetime quantum metric of the form
\begin{equation}\label{gintro} \tilde\cg=\cg+ \xi\tens \extd t + \extd t\tens\xi - (n-(\xi,\xi))\extd t\tens\extd t,\end{equation}
where $(\ ,\ ):\Omega^1\tens_A\Omega^1\to A$ is the inverse metric to $\cg$ and $\cg,\xi,n$ are freely time-dependent for a commutative central variable $t$ which we have adjoined to $A$. Details of the QRG formalism here are briefly recalled in Section~\ref{secpre}. Our main result, in Section~\ref{secmain} it to analyse what is needed for this and how to obtain $\tilde\nabla$ in principle, with an explicit solution in Section~\ref{seccen} in the case where $\Omega^1$ can a central basis and some associated assumptions. In Section~\ref{secdisc} we also have some partial results for the complementary case where the spatial QRG has a discrete aspect (and hence no central basis), including a complete solution for the angular geometry given by $\Z_n$ and a rotationally invariant spacetime quantum metric. 

We have been motivated by work \cite{Will} on a `discrete Wheeler-De-Witt equation' with the spatial geometry discrete and described by some form of Regge calculus. This does not fit into the QRG formalism but on the other hand the authors were able to go further to construct conjugate $\pi_{ij}$ etc. in relation to the Einstein-Hilbert action and variational calculus. The QRG version of extrinsic curvature to take this further is not clear and nor is noncommutative variational calculus (although with recent progress\cite{MaSim}), but in the meantime our results are still useful as a way to construct foliated quantum spacetimes in the first place. Moreover, we can use the same construction but for a spatial variable in the role of $t$ to build the spatial $A$ in the first place, for example starting with an angular quantum geometry and adjoining $r$ (in the role of $t$ above) and then iterating to adjoin $t$. By such iteration, we can similarly insert a sufficiently nice initial noncommutative algebra, including fuzzy spheres\cite{Mad,BegMa}, into most (pseudo)-Riemannian spacetimes as quantum versions. This provides a rich supply of examples of spacetime QRGs (and in principle recovers the fuzzy sphere Kaluza-Klein theory in\cite{LiuMa} in a different context). In particular, we will recover known fuzzy black-hole and fuzzy FLRW spacetimes and the $\Z_n$-black-hole in \cite{ArgMa}, but now understood more systematically and now completed with a $\Z_n$-FLRW quantum spacetime in Example~\ref{FLRW}. Some concluding remarks about directions for further work are in Section~\ref{seccon}.  

\section{Elements of quantum Riemannian geometry}\label{secpre}

The basic ingredients for a quantum Riemannian geometry (QRG) are $A$ a unital algebra in the role of coordinate algebra or `functions' in the geometry, but allowed to be noncommutative. A differential calculus is a graded algebra $\Omega=\oplus_n\Omega^n$ where $\Omega^0=A$, and $\extd: \Omega^n\to \Omega^{n+1}$ is the exterior derivative obeying a graded-Leibniz rule. In addition, we ask that $\Omega$ is generated by $A,\extd A$ so every form can be generated by functions and differentials. Included here is the Leibniz rule $\extd(ab)=a\extd b+(\extd a)b$ for all $a,b\in A$. The product among forms of positive degree is denoted $\wedge$ while the product by elements of $A$ in degree 0 is denoted as here by omission (and makes each $\Omega^n$ into an $A$-bimodule). If there is a basis $\{s^i\}$ of $\Omega^1$ (say a left one) then $\extd a=(\del_i a) s^i$ for all $a\in A$ (sum understood) defines partial derivatives $\del_i:A\to A$ with respect to the basis. If the $s^i$ commute with functions (a central basis) then the $\del_i$ are derivations.  In physical applications, we want $A$ to be a $*$-algebra (for an actual space, it would be complex-valued functions with $*$ given by complex conjugation) so $*^2=\id$ and $(ab)^*=b^*a^*$. We want this to extend to $\Omega$ in such a way that $\extd$ commutes with $*$ and $*$ is graded-antimultiplicative. We will ask for our bases, if they exist, to be self-adjoint so $(s^i)^*=s^i$.  

A metric is $\cg\in \Omega^1\tens_A\Omega^1$ admitting an inverse in the sense of a map $(\ ,\ ):\Omega^1\tens_A\Omega^1\to A$ which is well-defined for the tensor product over $A$ (where elements of $A$ can be moved through $\tens_A$) and is a bimodule map (respects the product by elements of $A$ from either side). That  it is inverse is expressed as 
\[ g^1(g^2,\omega)=\omega=(\omega,g^1)g^2\]
for all $\omega\in\Omega^1$, where $\cg=g^1\tens g^2$ is a notation for the two parts of $\cg$ (a sum of such terms is to be understood). Remarkably, one can deduce that
\begin{equation}\label{gcen} a\cg =\cg a\end{equation}
for all $a\in A$ even though we have not assumed that $\Omega$ is graded-commutative in any way. In addition, we ask for $\cg$ to be symmetric in some fashion. One approach to this is $\wedge(g)=0$ but in practice, e.g., on graphs, other variants could be more appropriate. Otherwise, $\cg$ is a generalised metric (it could even be antisymmetric). The `reality' condition is 
\[ {\rm flip}(*\tens *)(\cg)=\cg\]
which in the case where there is a self-adjoint central basis amounts to the matrix of coefficients defined by $\cg=g_{ij}s^i\tens s^j$ being hermitian with constant entries due to (\ref{gcen}). This will be the case for a fuzzy sphere.  Obviously, it is possible to weaken the axioms and lose (\ref{gcen}) but the requirements that lead to (\ref{gcen}) are natural and hence qualify as the first thing to look at even if other options are possible.

Next, we need a quantum Levi-Civita connection i.e. metric compatible and torsion free. The latter in this `algebraic' way of doing Riemannian geometry just amounts to vanishing of
\[ T_\nabla:=\wedge\circ\nabla-\extd: \Omega^1\to \Omega^2, \]
where a (left) connection $\nabla$ is defined as
\[ \nabla:\Omega^1\to \Omega^1\tens_A\Omega^1,\quad \nabla(a\omega)=\extd a\tens\omega+ a\nabla\omega\]
for all $a\in A, \omega\in \Omega^1$. The left most copy of $\Omega^1$ is where we could potentially apply a quantum vector field $X:\Omega^1\to A$ (defined as a right $A$-module map, i.e. respecting the action of $A$ from the right), which in the classical limit would correspond to a covariant derivative $\nabla_X$.  A special `bimodule' type of left connection is one that also admits a right Leibniz rule
\[ \nabla(\omega a)=(\nabla\omega)a+ \sigma(\omega\tens\extd a),\quad \sigma:\Omega^1\tens_A\Omega^1\to \Omega^1\tens_A\Omega^1\]
for some bimodule map $\sigma$ (this `generalised braiding' is not additional data, if it exists its uniquely determined by $\nabla$). In this case, 
$\nabla$ extends to tensor products so that we can make sense of 
\[ \nabla\cg:= \nabla g^1\tens g^2 + (\sigma(g^1\tens\ )\tens\id)\nabla g^2.\]
The role of $\sigma$ is to move the first leg of $\nabla g^2$ to the far left so that one could potentially evaluate a vector field against the left-most factor in a consistent manner. In classical geometry, this $\sigma$ would be taken for granted as a flip of tensor factors and connect to such a vector field as expected. We are now in position to define a QLC as a bimodule connection that obeys  $T_\nabla=\nabla\cg=0$. These are effectively linear plus quadratic constraints on the coefficients of $\nabla$ (because $\sigma$ depends linearly on $\nabla$) with the result that a solution may not exist or may exist but may not be unique. This non-linear aspect is missing classically when $\sigma$ is just the flip. We also want compatibility with $*$ in the sense 
\[ \nabla (\omega^*)=\sigma\circ {\rm flip}(*\tens *)\circ\nabla\omega\]
for all $\omega\in \Omega^1$, again generically not linear in $\nabla$. If we expand in a basis according to $\nabla s^i= - \Gamma^i{}_{jk} s^j\tens s^k$ and {\em if} the bases are self-adjoint, central and $\sigma$ on the basis is a flip, then this amounts to $\Gamma^i{}_{jk}$ real. This will be the case for the fuzzy sphere, where there is (under some mild assumptions) a unique QLC determined by the metric coefficients $g_{ij}$, see \cite{LirMa,BegMa}. 

It is worth noting that in any QRG, torsion freeness and the connection preserving $*$ imply respectively that\cite[Chap. 8]{BegMa}
\[ \wedge\circ(\id+\sigma)=0,\quad {\rm flip}(*\tens *)\circ \sigma=\sigma^{-1}\circ{\rm flip}(*\tens *).\]
We restrict attention to connections with invertible $\sigma$. The formalism goes on to the construction of the Riemann curvature $R_\nabla$, which is canonical, and Ricci curvature which is a working definition in the sense that it needs additional data of a `lift' bimodule map $i:\Omega^2\to \Omega^1\tens_A\Omega^1$ such that following this with $\wedge$ is the identity map. This allows to separate and hence contract the indices of the Riemann tensor but lacks a more conceptual origin (which likely would need noncommutative variational calculus). More details and examples are in \cite{BegMa}. 

\section{Extending spatial quantum geometries by central time}\label{secmain}

Here we analyse what happens if we are given a spatial QRG on an algebra $A$ with exterior algebra $(\Omega,\extd)$, metric $\cg$ and quantum Levi-Civita connection $\hat\nabla$, and want to view it as a slice at time $t=0$ of a quantum spacetime and construct the QLC $\tilde\nabla$ for that. We won't assume that  its restriction to $A$ is $\hat\nabla$ but could be some other torsion free connection $(\nabla,\sigma)$ (related to $\hat\nabla$ by addition of a bimodule map). In general this is a hard question and to get off the ground we have to first decide what properties $t$ and $\extd t$ should have. In this paper, we take the simplest case that $t$ commutes with everything and $\extd t$ graded-commutes with everything. It means that $t$ is a classical time variable and parametrises each time slice. We take the spacetime coordinate algebra $\tilde A:=A\tens C^\infty(\R)$ (or better, extend this to $C^\infty(\R, A)$ for a suitable notion of smoothness which we will not make precise). For differential forms, we similarly take $\tilde\Omega:=\Omega\underline{\tens} \Omega(\R)$ where the two factors graded-commute. Here 
\[\tilde\Omega^n=\Omega^n\tens C^\infty(\R) \oplus \Omega^{n-1}\tens C^\infty(\R)\extd t.\]

In this case, the most general quantum metric has the form
\begin{equation}\label{tg} \tilde\cg=\cg + \xi\tens\extd t+ \extd t\tens\eta+ N\extd t\tens\extd t\end{equation}
for 1-forms $\xi,\eta\in \Omega^1$ and a function $N\in A$ (in the role of $(\xi,\xi)-n$ in (\ref{gintro}), by a small abuse of notation), except that we allow these and the spatial quantum metric $\cg$ to now depend on $t$.  We will later set $\xi=\eta$ as some form of quantum symmetry, but for now we keep our options open. At all times,  we will need to maintain
\[ [a,\cg]=[a,\xi]=[a,\eta]=[a,N]=0\]
for all $a\in A$ so that the metric can remain bimodule. Next, we take the QLC again in the most general form
\begin{align} \tilde\nabla\omega&=\nabla\omega+ \alpha(\omega)\tens\extd t+ \extd t\tens \beta(\omega) + \gamma(\omega)\extd t\tens\extd t, \label{tcon1}\\
\tilde\nabla\extd t&=\Xi +   C\tens\extd t+\extd t\tens D+ E\extd t\tens\extd t \label{tcon2}\end{align}
for all $\omega\in \Omega^1$ (i.e., on $A$). Here 
\[ \alpha,\beta:\Omega^1\to \Omega^1,\quad \gamma:\Omega^1\to A,\quad \Xi\in \Omega^1\tens_A\Omega^1,\quad C,D\in \Omega^1,\quad E\in A\]
except that everything is also allowed to depend on $t$. The conditions we impose at this stage apply for every fixed $t$ so we keep this in the background for now. Indeed, because $T_{\tilde\nabla}$ and $\tilde\sigma$ are extended as left-module and bimodule maps respectively, it suffices to test the conditions on time-independent elements as any time-dependence in coefficients can be factored out term by term. 

\begin{lemma}\label{lem:tor} For $\tilde \nabla$ to have zero torsion, we need $T_\nabla=0$, $C=D, \alpha=\beta$ and $\wedge(\Xi)=0$. \end{lemma}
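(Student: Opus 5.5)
Since $T_{\tilde\nabla}=\wedge\tilde\nabla-\extd$ is a left $\tilde A$-module map, it suffices to require that it vanishes on the generators $\omega\in\Omega^1$ (time-independent spatial 1-forms) and on $\extd t$. In each case I will apply $\wedge$ to (\ref{tcon1}) and (\ref{tcon2}), compute $\tilde\extd$, and split the resulting element of $\tilde\Omega^2=\Omega^2\tens C^\infty(\R)\oplus\Omega^1\tens C^\infty(\R)\extd t$ into its two components. The direct sum decomposition means each component must vanish separately. The only ingredients needed are $\extd t\wedge\extd t=0$ and the graded-commutation $\extd t\wedge\omega=-\omega\wedge\extd t$ for $\omega\in\Omega^1$, both of which hold by the construction of $\tilde\Omega$ as a graded tensor product.

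First take $\omega\in\Omega^1$ with no $t$-dependence. Then $\tilde\extd\omega=\extd\omega$, with no $\extd t$ part. Applying $\wedge$ to (\ref{tcon1}) kills the $\gamma$ term and gives
\[\wedge\tilde\nabla\omega=\wedge\nabla\omega+\alpha(\omega)\wedge\extd t+\extd t\wedge\beta(\omega)=\wedge\nabla\omega+(\alpha(\omega)-\beta(\omega))\wedge\extd t.\]
The $\Omega^2$ component of the torsion is therefore $T_\nabla(\omega)$, and the $\Omega^1\extd t$ component is $(\alpha-\beta)(\omega)\wedge\extd t$. Because $\tilde\Omega$ is a free module in the $\extd t$ direction, the latter vanishes if and only if $\alpha(\omega)=\beta(\omega)$. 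So we need $T_\nabla=0$ and $\alpha=\beta$.

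Next take $\extd t$ itself. Here $\tilde\extd\extd t=0$, and the same rules applied to (\ref{tcon2}) give
\[\wedge\tilde\nabla\extd t=\wedge(\Xi)+(C-D)\wedge\extd t.\]
Separating components then yields $\wedge(\Xi)=0$ and $C=D$, while $E$ drops out. Conversely, these conditions make the torsion vanish on all generators, and hence everywhere by left-linearity. Any $t$-dependence of the coefficients is harmless, since it factors out by the remark preceding the lemma.

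I do not expect a serious obstacle. The one point needing care is the identification of the $\Omega^1\tens C^\infty(\R)\,\extd t$ component. I need that $\omega\wedge\extd t=0$ forces $\omega=0$, which follows from the direct sum description of $\tilde\Omega^2$. I also need the sign convention $\extd t\wedge\beta=-\beta\wedge\extd t$, which is exactly what turns the condition into $\alpha=\beta$ rather than $\alpha=-\beta$.
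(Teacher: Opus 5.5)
Your proposal is correct and follows essentially the same route as the paper. You apply $\wedge$ to (\ref{tcon1})--(\ref{tcon2}) and use $\extd t\wedge\extd t=0$ together with the anticommutation of $\extd t$ with $\Omega^1$, which gives $T_\nabla(\omega)+(\alpha(\omega)-\beta(\omega))\wedge\extd t$ and $\wedge(\Xi)+(C-D)\wedge\extd t$; you then require the components in the two direct summands of $\tilde\Omega^2$ to vanish separately, exactly as the paper does, and your remarks on left-linearity of the torsion and on the converse direction are sound details that the paper leaves implicit.
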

\begin{proof} This follows from 
\[ \wedge\tilde\nabla\omega-\extd\omega=T_\nabla(\omega)+ (\alpha(\omega)-\beta(\omega))\wedge\extd t,\quad  \wedge\tilde\nabla\extd t-\extd^2t= \wedge(\Xi)+ (C-D)\wedge\extd t\]
since $\extd^2 t=0$ and $\extd t$ anticommutes with $\Omega^1$. We then need the 
terms in different spaces to vanish separately.  \end{proof}

\begin{proposition}\label{prop:sig} A natural and sufficient condition for $\sigma$ to exist as a bimodule map is that $\alpha,\beta,\gamma$ are bimodule maps and $\Xi,C,D,E$ are central. In this case $\tilde\sigma$ is the flip when either argument is $\extd t$ and $\sigma$ otherwise. 
\end{proposition}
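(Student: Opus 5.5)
The plan is to compute $\tilde\nabla$ on products directly and read off $\tilde\sigma$ from the right Leibniz rule $\tilde\nabla(\tilde\omega f)=(\tilde\nabla\tilde\omega)f+\tilde\sigma(\tilde\omega\tens\tilde\extd f)$ for $\tilde\omega\in\tilde\Omega^1$ and $f\in\tilde A$. As noted before Lemma~\ref{lem:tor}, it suffices to test on time-independent elements. So we take $\tilde\omega=\omega\in\Omega^1$ or $\tilde\omega=\extd t$, and $f=a\in A$ or $f=t$. The differential in $\tilde\Omega$ is $\tilde\extd a=\extd a$ and $\tilde\extd t=\extd t$. Throughout we use that $t$ is central and $\extd t$ graded-commutes, so $\extd t\, a=a\,\extd t$.

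The key steps are four cases.

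\emph{Case $(\omega,a)$.} Expanding $\tilde\nabla(\omega a)$ with (\ref{tcon1}) gives
\[\nabla(\omega a)+\alpha(\omega a)\tens\extd t+\extd t\tens\beta(\omega a)+\gamma(\omega a)\extd t\tens\extd t.\]
Its right-hand counterpart is
\[(\nabla\omega)a+\alpha(\omega)\tens\extd t\, a+\extd t\tens\beta(\omega)a+\gamma(\omega)\extd t\tens\extd t\, a+\tilde\sigma(\omega\tens\extd a).\]
The $\nabla$ terms differ by $\sigma(\omega\tens\extd a)$, since $\nabla$ is a bimodule connection. If $\alpha,\beta,\gamma$ are right module maps, the remaining terms cancel because $\extd t\, a=a\,\extd t$. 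Hence $\tilde\sigma(\omega\tens\extd a)=\sigma(\omega\tens\extd a)$. The left Leibniz rule for $\tilde\nabla$ on $a\omega$ requires nothing extra, since $\tilde\extd a$ has no $\extd t$ part. We do need left linearity of $\alpha,\beta,\gamma$ for $\tilde\nabla$ to be a left connection at all, i.e.\ to be well defined on $\Omega^1$ as a bimodule; this is why we require them to be bimodule maps.

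\emph{Case $(\extd t,a)$.} Here $\tilde\nabla(\extd t\, a)=\tilde\nabla(a\,\extd t)=\extd a\tens\extd t+a\tilde\nabla\extd t$. The right Leibniz rule gives $(\tilde\nabla\extd t)a+\tilde\sigma(\extd t\tens\extd a)$. Centrality of $\Xi,C,D,E$ gives $a\tilde\nabla\extd t=(\tilde\nabla\extd t)a$, so $\tilde\sigma(\extd t\tens\extd a)=\extd a\tens\extd t$, which is the flip.

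\emph{Cases $(\omega,t)$ and $(\extd t,t)$.} Since $t$ is central and $\alpha,\dots,E$ (for fixed $t$) are extended $C^\infty(\R)$-linearly, $\tilde\nabla(\omega t)=\extd t\tens\omega+t\tilde\nabla\omega$. This yields $\tilde\sigma(\omega\tens\extd t)=\extd t\tens\omega$. In the same way, $\tilde\sigma(\extd t\tens\extd t)=\extd t\tens\extd t$.

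It remains to check that the map $\tilde\sigma$ so defined is a well-defined bimodule map on $\tilde\Omega^1\tens_{\tilde A}\tilde\Omega^1$. On $\Omega^1\tens_A\Omega^1$ it is $\sigma$, which is already a bimodule map. On the mixed and $\extd t\tens\extd t$ components it is the flip, which is well defined over $\tens_A$ because $\extd t$ commutes with $A$. Compatibility with moving $t$ across $\tens$ is immediate. One should also confirm that the values on generators $\omega\tens\extd f$ extend consistently. This follows from the standard fact that $\tilde\sigma$, if it exists, is determined by $\tilde\nabla$: the defining formula $\tilde\sigma(\tilde\omega\tens\tilde\extd f)=\tilde\nabla(\tilde\omega f)-(\tilde\nabla\tilde\omega)f$ then holds on all generators by the above computations plus left/right linearity.

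The main obstacle I expect is this last consistency step, together with pinning down the role of the left module property. One must make sure that the piecewise definition (flip versus $\sigma$) genuinely satisfies $\tilde\sigma(\tilde\omega b\tens\tilde\extd f)=\tilde\sigma(\tilde\omega\tens b\,\tilde\extd f)$. This could be checked directly on the decomposition $\tilde\Omega^1=\Omega^1\tens C^\infty(\R)\oplus A\tens C^\infty(\R)\extd t$ and the corresponding four-block decomposition of the tensor square. The word ``natural'' in the statement reflects that these conditions are sufficient rather than claimed necessary, so no converse is needed.
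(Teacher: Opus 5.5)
Your proposal is correct and follows essentially the paper's route: you read off $\tilde\sigma$ from the difference of the left and right Leibniz rules on the generators $\omega,\extd t$ and $a,t$, and the bimodule property of $\alpha,\beta,\gamma$ together with centrality of $\Xi,C,D,E$ removes the extra terms. The paper instead first writes $\tilde\sigma$ in general form with commutator terms $[a,\cdot\,]$, such as $\tilde\sigma(\extd t\tens\extd a)=\extd a\tens\extd t+[a,\tilde\nabla\extd t]$, and then notes that these cancel under the hypotheses. It covers the final well-definedness check in a single sentence, which you spell out in somewhat more detail.
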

\begin{proof} Subtracting the two Leibniz rules for $\tilde\nabla$, and assuming that $\tilde\sigma$ exists, one arrives as
\begin{align*} \tilde\sigma(\extd t\tens \extd a)&=\extd a\tens\extd t+ [a, \Xi+C\tens\extd t+ \extd t\tens D+ E\extd t\tens\extd t]\\
\tilde\sigma(\omega\tens \extd t)&= \extd t\tens\omega \\
\tilde\sigma(\extd t\tens \extd t)&= \extd t\tens\extd t \\
\tilde\sigma(\omega\tens \extd a)&= \sigma(\omega\tens\extd a)+   [a, \alpha(\omega)\tens\extd t+ \extd t\tens \beta(\omega)+ \gamma(\omega)\extd t\tens\extd t]\\
&\quad + \alpha([\omega,a])\tens\extd t+ \extd t\tens\beta([\omega,a])+ \gamma([\omega,a])\extd t\tens\extd t.
\end{align*}
For example, the last case comes from expanding $\tilde\nabla(\omega a)-\tilde\nabla(a\omega)$ using the two Leibniz rules and rearranging as
\begin{align*}\tilde\sigma(\omega\tens \extd a)&=[a,\tilde\nabla\omega]+\tilde\nabla([\omega, a])+\extd a\tens \omega.
\end{align*} 
We then expand this out in terms of the definition of $\tilde\nabla$ and also recognise the spatial $\sigma$ among the terms. The other cases are similar but easier so we omit them. Now, for this to be well-defined we need that the terms involving $[a, ]$ depend only on $\extd a$, which depends on the relations in $\Omega^1$. But one simple solution is the one stated, given that $\extd t$ commutes with $A$. Then terms involving $[a, ]$ all cancel out. One check that there are then no further conditions from $\tilde\sigma$ being a well-defined bimodule map. 
\end{proof}

We will now proceed under the requirements stated in the lemma and proposition, to have a bimodule torsion-free connection. As for the metric, we write $\Xi=\Xi^1\tens\Xi^2\in \Omega^1\tens_A\Omega^1$ with a sum of terms understood. We also use dot for $\extd\over\extd t$. 

\begin{theorem}\label{thm:nabg} A torsion free bimodule connection $\tilde\nabla$ as in Lemma~\ref{lem:tor} and Proposition~\ref{prop:sig} is metric compatible with $\tilde g$ iff
\begin{align} \nabla\cg+\Xi\tens\eta+ \sigma(\xi\tens\Xi^1)\tens \Xi^2=0, \label{g1}\\
 \sigma(g^1\tens\alpha(g^2)+\xi\tens C)+ \nabla\xi+ N\Xi=0, \label{g2}\\
\alpha(g^1)\tens g^2+ C\tens\eta + \nabla\eta + N\Xi=0, \label{g3}\\
 \dot\cg+ g^1\tens\alpha(g^2)+ \alpha(g^1)\tens g^2+C\tens\eta+\xi\tens C=0, \label{g4}\\
\extd N+ 2NC+ \alpha(\xi+\eta) =0, \label{g5}\\
 \dot\xi+ E\xi + \alpha(\xi) + N C+ g^1\gamma(g^2)=0, \label{g6}\\
 \dot\eta+ E\eta+ \alpha(\eta)+ NC+ \gamma(g^1)g^2=0, \label{g7}\\
\dot N+ 2E N + \gamma(\xi+\eta)=0, \label{g8},
\end{align}
where $\extd N$ is the spatial exterior derivative. 
\end{theorem}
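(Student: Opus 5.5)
Expand $\tilde\nabla\tilde\cg=\tilde\nabla g^1\tens g^2+(\tilde\sigma(g^1\tens\ )\tens\id)\tilde\nabla g^2$ term by term for each of the four summands of (\ref{tg}). Then sort the result according to the decomposition
\[\tilde\Omega^1\tens_{\tilde A}\tilde\Omega^1\tens_{\tilde A}\tilde\Omega^1=\bigoplus (\Omega^1\ \hbox{or}\ \extd t)^{\tens 3}\]
into eight components, indexed by where the $\extd t$'s sit. Since everything is over $C^\infty(\R)$ and $\extd t$ commutes with $\tilde A$, the coefficient of each component must vanish separately. This gives exactly eight equations, and I expect them to match (\ref{g1})--(\ref{g8}) in the order $(\Omega^1)^3$, $\Omega^1\,\Omega^1\,\extd t$, $\Omega^1\,\extd t\,\Omega^1$, $\extd t\,\Omega^1\,\Omega^1$, and so on.

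Several ingredients enter the expansion. First, the spacetime exterior derivative of a $t$-dependent coefficient is $\tilde\extd=\extd+\extd t\,\partial_t$. This produces $\dot\cg$, $\dot\xi$, $\dot\eta$ and $\dot N$, all in the slot where $\extd t$ comes first. We can also use the remark before Lemma~\ref{lem:tor}: take $g^1,g^2$ and the like at fixed $t$, and account for the time derivative only through the left Leibniz rule applied to the coefficients. Second, by Proposition~\ref{prop:sig}, $\tilde\sigma$ acts as the flip whenever an argument is $\extd t$ and as $\sigma$ otherwise. Third, $\tilde\nabla\omega$ and $\tilde\nabla\extd t$ are given by (\ref{tcon1}) and (\ref{tcon2}), with $\alpha=\beta$ and $C=D$ from Lemma~\ref{lem:tor}. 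Centrality of $\Xi$, $C$, $E$, together with the bimodule property of $\alpha$ and $\gamma$, lets us move functions freely through tensors.

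The summands contribute as follows.
\begin{itemize}
\item From $\cg$, the pieces $\nabla\cg$, $\alpha(g^1)\tens g^2$, $\sigma(g^1\tens\alpha(g^2))$ and $\gamma$-terms land in the various slots.
\item From $\xi\tens\extd t$, the pieces $\tilde\nabla\xi\tens\extd t+(\tilde\sigma(\xi\tens\ )\tens\id)\tilde\nabla\extd t$ give $\nabla\xi\tens\extd t$, $\sigma(\xi\tens\Xi^1)\tens\Xi^2$, $\sigma(\xi\tens C)\tens\extd t$, and so on.
\item From $\extd t\tens\eta$, the term $\tilde\nabla\extd t\tens\eta$ plus the flipped $\extd t\tens\tilde\nabla\eta$ give $\Xi\tens\eta$, $C\tens\extd t\tens\eta$, $\extd t\tens\nabla\eta$, and so on.
\item From $N\extd t\tens\extd t$, we get $\extd N$, $N\Xi$, $NC$ and $NE$ terms.
\end{itemize}
Careful bookkeeping should show that the $\Omega^1\Omega^1\extd t$ slot gives (\ref{g2}), which is written with $\sigma$ applied because that term arises as $(\sigma\tens\id)$. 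The remaining slots give (\ref{g3})--(\ref{g8}), with the factor $2$ in (\ref{g5}) and (\ref{g8}) coming from the two legs of $\extd t\tens\extd t$.

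The main obstacle is purely organisational. We must track which slot each term lands in after the flips, and in particular check that contributions fall into the same slot and combine correctly; for instance $\extd N$ and $C$ together with $\alpha(\xi)$, $\alpha(\eta)$ all end up in the $\extd t\,\Omega^1\,\extd t$ type slot. We also need to confirm that the three slots with a single spatial form, and the three with two, each receive precisely the listed terms. I would handle this by writing, for each of the eight slots, a table of the contributing terms, using $\tilde\sigma(\extd t\tens\omega)=\omega\tens\extd t$ and its reverse consistently. The converse direction ("iff") is then immediate, because the decomposition is direct.
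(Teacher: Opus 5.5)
Your proposal is correct and follows essentially the paper's proof: expand $\tilde\nabla\tilde\cg$, use that $\tilde\sigma$ is the flip whenever $\extd t$ is involved, read off the eight components in the order you state, and add the $\dot\cg,\dot\xi,\dot\eta,\dot N$ terms in the slots beginning with $\extd t$. When you fill in your table, note two slot placements that your sketch gets wrong. First, $\tilde\sigma\tens\id$ sends $\extd t\tens\nabla\eta$ and $N\extd t\tens\Xi$ to $(\nabla\eta)^1\tens\extd t\tens(\nabla\eta)^2$ and $N\Xi^1\tens\extd t\tens\Xi^2$, so they land in the $\Omega^1\,\extd t\,\Omega^1$ slot, which is why they appear in (\ref{g3}) rather than (\ref{g4}). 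Second, $\extd N$, $2NC$ and $\alpha(\xi+\eta)$ sit in the $\Omega^1\,\extd t\,\extd t$ slot, giving (\ref{g5}), while the $\extd t\,\Omega^1\,\extd t$ slot gives (\ref{g6}).
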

\begin{proof} First, we need to look more carefully at how connections $\tilde\nabla$ of the form (\ref{tcon1})-(\ref{tcon2}) act on time-dependent 1-forms. If $\tilde\omega=a(t)\omega$ where $\omega$ is time-independent then 
\begin{align*}\tilde\nabla\tilde\omega&=\extd a(t)\tens\omega+ a(t)\tilde\nabla\omega=\dot a(t)\, \extd t\tens\omega+ a(t) \tilde\nabla\omega\\
&=\extd t\tens\dot{\tilde\omega}+ a(t)\nabla\omega+ a(t)\alpha(\omega)\extd t+ a(t)\extd t\tens \beta(\omega)+a(t)\gamma(\omega)\extd t\tens\extd t.\end{align*}
What this means is that when a 1-form $\tilde\omega$ is time-dependent, we can use the same formula (\ref{tcon1}) ignoring the time-dependence by the convention that $\alpha,\beta,\gamma$ commute with multiplication by $a(t)$ (so in the above, we can take it inside and recombine to $\tilde\omega$) but we pick up an extra term $\extd t\tens \dot{\tilde\omega}$. Similarly, if we have $\tilde N=a(t) N$ say, then $\tilde\nabla(\tilde N \extd t)=\extd (a(t)N)\tens\extd t+ \tilde N\tilde\nabla\extd t= \dot{\tilde N}\extd t\tens\extd t+ a(t)\tilde\nabla(N\extd t)$ so we get an extra term from the time dependence compared to what we would have otherwise. With this in mind we let $\omega,N$ etc in what follows be time-dependent but proceed as if they are not, with the extra $\dot\omega,\dot N$ etc terms added where there was a covariant derivative applied.

Next, proceeding for brevity in the torsion-free case, we insert (\ref{tg}), (\ref{tcon1})-(\ref{tcon2}) into $\tilde\nabla\tilde g$. We use $\tens_S$ for a tensor product  plus its flip as a short-hand and ignore time dependence for now. Then we obtain
\begin{align*} \tilde\nabla\tilde g&=\nabla g^1\tens g^2+ \alpha(g^1)\tens_S\extd t \tens g^2+ \gamma(g^1)\extd t\tens\extd t\tens g^2+ \extd N\tens\extd t\tens\extd t\\
&\quad + N(\Xi+ C\tens_S\extd t+ E\extd t\tens\extd t)\tens\extd t+ \nabla\xi\tens \extd t+ \alpha(\xi)\tens_S\extd t+ \gamma(\xi)\extd t\tens\extd t\tens\extd t\\
&\quad + (\Xi+C\tens_S\extd t+ E\extd\tens\extd t)\tens\eta\\
&\quad +(\tilde\sigma\tens\id)\Big(g^1\tens\nabla g^2+ g^1\tens\alpha(g^2)\tens_S\extd t+ g^1\tens\gamma(g^2)\extd t\tens\extd t\\
&\quad + (N\extd t+\xi)\tens(\Xi+ C\tens_S\extd t+E\extd t\tens\extd t)+\extd t\tens(\nabla\eta+\alpha(\eta)\tens_S\extd t+ \gamma(\eta)\extd t\tens\extd t)\Big).  \end{align*}
Next, we use that $\tilde\sigma$ in Proposition~\ref{prop:sig} is the flip when one argument is $\extd t$, and $\sigma$ otherwise. With this mind, we equate to zero each of the different components in $\Omega^1\tens_A\Omega^1\tens_A\Omega^1, \Omega^1\tens_A\Omega^1\tens_A\extd t, \Omega^1\tens_A\extd t\tens_A\Omega^1, \extd t\tens_A\Omega^1\tens_A\Omega^1, \Omega^1\tens_A\extd t\tens_A\extd t, \extd t\tens_A\Omega^1\tens_A\Omega^1, \extd t\tens_A\extd t\tens_A\Omega^1+ \extd t\tens_A \extd t\tens_A\extd t$ in that order. Note that the first term above and the first term inside $\tilde\sigma$ gives $\nabla\cg$ for the spatial metric. The result is the 8 equations listed in the time-independent case. We then review where $\tilde\nabla$ was applied to a potentially time dependent function or 1-form and add terms to the 4th, 6th, 7th and 8th equations from this. Here $\extd t\tens \dot{g^1}\tens g^2+ \tilde\sigma(g^1\tens \extd t\tens \dot{g^2})=\extd t\tens \dot g$.  \end{proof}

Note that it is easy enough to `polarise' these formula if we want to drop the torsion freeness (by replacing $C\tens_S\extd t$ and $\alpha(\xi)\tens_S\extd t$ etc, by the general expressions so we have both $\alpha,\beta$ and $C,D$.) It remains to see when $\tilde g$ and $\tilde \nabla$ obey the `reality' conditions for a $*$-structure on the spatial $(A, \Omega,\extd)$. We extend this to $\tilde A$ by $t=t^*$ as this variable is treated classically.

\begin{corollary}\label{cor:star} (1) A metric $\tilde\cg$ of the form (\ref{tg}) obeys the `reality' condition iff $\cg$ does and $N^*=N$, $\xi^*=\eta$.

(2) A bimodule connection $\tilde\nabla$ as in Proposition~\ref{prop:sig}  obeys the `reality' ($*$-preserving) condition iff $\nabla$ does, $\alpha,\beta,\gamma$ commute with $*$ and
\[ C^*=C,\quad D^*=D,\quad E^*=E,\quad \sigma\circ{\rm flip}(*\tens *)(\Xi)=\Xi.\] 
\end{corollary}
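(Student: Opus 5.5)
For (1), apply ${\rm flip}(*\tens *)$ term by term to $\tilde\cg$ in (\ref{tg})
\[\tilde\cg=\cg + \xi\tens\extd t+ \extd t\tens\eta+ N\extd t\tens\extd t,\]
using $(\extd t)^*=\extd t$ (from $t^*=t$ and $\extd$ commuting with $*$) and $(a\omega)^*=\omega^* a^*$. This gives
\[ {\rm flip}(*\tens *)\tilde\cg={\rm flip}(*\tens*)\cg+ \extd t\tens\xi^*+\eta^*\tens\extd t+ \extd t\tens\extd t\,N^*.\]
Since $\extd t$ is central, the last term is $N^*\extd t\tens\extd t$. The spaces $\Omega^1\tens_A\Omega^1$, $\Omega^1\tens\extd t$, $\extd t\tens\Omega^1$ and $A\,\extd t\tens\extd t$ form a direct sum in $\tilde\Omega^1\tens_{\tilde A}\tilde\Omega^1$, so we can compare components. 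The reality condition then holds iff ${\rm flip}(*\tens*)\cg=\cg$, $\eta^*=\xi$ (equivalently $\xi^*=\eta$, as $*$ is an involution) and $N^*=N$. Time dependence causes no difficulty because $t$ is self-adjoint and central, so this holds pointwise in $t$.

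For (2), we must check $\tilde\nabla(\omega^*)=\tilde\sigma\circ{\rm flip}(*\tens*)\tilde\nabla\omega$ on the generators $\omega\in\Omega^1$ and on $\extd t$; this suffices since both sides satisfy compatible Leibniz rules. For $\omega\in\Omega^1$, expand the left side using (\ref{tcon1}):
\[\nabla\omega^*+\alpha(\omega^*)\tens\extd t+\extd t\tens\beta(\omega^*)+\gamma(\omega^*)\extd t\tens\extd t.\]
On the right, ${\rm flip}(*\tens*)$ sends $\alpha(\omega)\tens\extd t$ to $\extd t\tens\alpha(\omega)^*$, and $\tilde\sigma$, which by Proposition~\ref{prop:sig} is the flip on terms containing $\extd t$, returns this to $\alpha(\omega)^*\tens\extd t$. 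Similarly $\extd t\tens\beta(\omega)$ goes to $\extd t\tens\beta(\omega)^*$, and the scalar term goes to $\gamma(\omega)^*\extd t\tens\extd t$, using centrality of $\extd t$ and the reversal of products under $*$. The purely spatial part becomes $\sigma\circ{\rm flip}(*\tens*)\nabla\omega$. Comparing components, the condition holds iff $\nabla$ is $*$-preserving, $\alpha(\omega^*)=\alpha(\omega)^*$, $\beta(\omega^*)=\beta(\omega)^*$ and $\gamma(\omega^*)=\gamma(\omega)^*$.

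For $\extd t$, we have $(\extd t)^*=\extd t$, so we need $\tilde\nabla\extd t=\tilde\sigma{\rm flip}(*\tens*)\tilde\nabla\extd t$. Using (\ref{tcon2}) and the same computation, the right side is
\[\sigma{\rm flip}(*\tens*)\Xi+C^*\tens\extd t+\extd t\tens D^*+E^*\extd t\tens\extd t.\]
Comparing components gives exactly $\sigma\circ{\rm flip}(*\tens*)(\Xi)=\Xi$, $C^*=C$, $D^*=D$ and $E^*=E$.

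The main obstacle is bookkeeping rather than conceptual. We must check that moving the coefficient functions $E$, $\gamma(\omega)$, $N$ past $\extd t$ is legitimate, which holds because $\extd t$ is central and these coefficients are central by Proposition~\ref{prop:sig}. We also need the direct-sum decomposition of $\tilde\Omega^1\tens_{\tilde A}\tilde\Omega^1$ to justify comparing components. Both follow from $\tilde\Omega=\Omega\underline{\tens}\Omega(\R)$ with graded-commuting factors.
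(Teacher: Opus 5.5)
Your proposal is correct and is essentially the paper's proof: you expand via (\ref{tg}) and (\ref{tcon1})--(\ref{tcon2}), use $(\extd t)^*=\extd t$ and that $\tilde\sigma$ is the flip whenever $\extd t$ is involved, and compare components. Your Leibniz-rule reduction to generators plays the role of the paper's remark that the time-derivative terms cancel. One small correction: moving $\gamma(\omega)$ past $\extd t$ only uses that $\extd t$ commutes with $A$, since $\gamma(\omega)$ itself need not be central (Proposition~\ref{prop:sig} only makes $\gamma$ a bimodule map).
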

\begin{proof}  We make $\tilde\Omega$ into a $*$-calculus in the obvious way with $t$ and hence $\extd t$ self-adjoint. Then the conditions for $\tilde g$ of the form (\ref{tg}) apply at each $t$ and using this form, ${\rm flip}(*\tens *)(\tilde g)=\tilde g$ immediately gives the result on looking at the different components. The $\Omega^1\tens_A\Omega^1$ component is just ${\rm flip}(*\tens *)(g)=g$ for the spatial metric. Next, $*$ commutes with time derivatives hence, while time-dependence in the coefficients will give some extra terms from $\tilde\nabla$, these will cancel and we can test the requirement on time-independent elements. We then use the form (\ref{tcon1})-(\ref{tcon2}) to expand out $\tilde\sigma\circ{\rm flip}(*\tens *)\circ\tilde\nabla$ applied to $\omega, \extd t$ and compare with $\tilde\nabla$ applied to $\omega^*,\extd t$. The $\Omega^1\tens_A\Omega^1$ components need $\sigma\circ{\rm flip}(*\tens *)\circ\nabla\omega=\nabla(\omega^*)$ and $\sigma\circ{\rm flip}(*\tens *)(\Xi)=\Xi$, respectively. We used that $\tilde\sigma$ is just the flip if an argument is $\extd t$ and $\sigma$ otherwise as in Proposition~\ref{prop:sig}. \end{proof}

Note that the equations in Theorem~\ref{thm:nabg} are either invariant or fall into pairs related by $*$ (namely (\ref{g2})/(\ref{g3}) and (\ref{g6})/(\ref{g7})). They also substantially overlap directly if we impost quantum symmetry in the form $\wedge(\tilde{g})=0$, assuming $\wedge(g)=0$. This amounts to
\begin{equation}\label{tgsim} \xi=\eta \end{equation}
or in the $*$-algebra setting it amounts to $\xi=\xi^*$, i.e. with real coefficients with respect to a self-adjoint basis. We henceforth focus on this case. But note as a warning that the conjugate pairs are only equivalent if we know that $\alpha,\gamma$ commute with $*$ and in fact imposing both of them is equivalent to this. So we have to solve both halves even when $\xi=\eta=\xi^*$. 

We are also free to substitute all the terms in (\ref{g4}) after $\dot\cg$ using (\ref{g2})-(\ref{g3}), assuming that $\sigma$ is invertible (which we do). In the self-adjoint case (\ref{g4}) is then equivalent to
\begin{equation}\label{g9} \dot\cg =(\id+\sigma^{-1})(N\Xi+ \nabla\xi).\end{equation}

We now explain how these equations can be solved,  subject to some details which depend on the specific QRG $A$. First, note that  if $\nabla$ is torsion free then ${\rm Image}(\id+\sigma)\subseteq\ker\wedge$ by a result in \cite[Eq. (3.30)]{BegMa} and hence the same is true for $(\id+\sigma^{-1})$. If this is an equality then we might expect the map restricted to $\ker\wedge$ to be invertible. If so then (\ref{g9}) has a unique solution for $\Xi$ such that $\wedge(\Xi)=0$. Here $N$ is assumed invertible and $(\id+\sigma^{-1})\nabla\xi$ is moved to the left hand side so that if $\cg$ is quantum symmetric in the wedge sense then the new left hand side also lives in $\ker\wedge$. We proceed in the self-adjoint case where $\eta=\xi^*=\xi$, $N^*=N$, with $N$ invertible, and fix a QLC $\hat\nabla$ for the spatial metric with braiding $\sigma$. Finally, given that $N,\xi$ are central, it is also natural to assume that $\extd N$ is central and
\begin{equation}\label{cenxi} (\xi,\ )=(\ ,\xi),\quad \sigma(\xi\tens\ )=(\ )\tens\xi.\end{equation}
The first of these ensures that if $\omega$ self-adjoint then so is $(\xi,\omega)$, while the second ensures that $\hat\nabla\xi$ is central. 

\begin{proposition}\label{prop:gen} In our setting, suppose that $\xi$ obeys (\ref{cenxi}) and $\extd N$ is central,  $n$ is invertible and that a central $\Xi$ exists solving 
\[ \dot\cg=(\id+\sigma^{-1})(\hat\nabla\xi-n\Xi),\quad \wedge(\Xi)=0\]
and obeying the `reality'  condition  in Corollary~\ref{cor:star}(2). Then there is a unique solution to (\ref{g1}),(\ref{g3}),(\ref{g5}),(\ref{g6}),(\ref{g8}), which makes $\tilde\nabla$ a QLC iff 
\[\sigma(g^1\tens \alpha(g^2))=\alpha(g^1)\tens g^2,\quad g^1\gamma(g^2)=\gamma(g^1)g^2.\]
\end{proposition}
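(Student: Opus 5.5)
The plan is to take the spatial connection to be a specific bimodule-map modification of $\hat\nabla$, chosen so that (\ref{g1}) holds automatically. Then I would solve (\ref{g5}), (\ref{g3}), (\ref{g6}), (\ref{g8}) in turn for $C,\alpha,\gamma,E$, and finally compare the remaining equations (\ref{g2}), (\ref{g7}), (\ref{g4}) with the ones already solved. Throughout, $\eta=\xi=\xi^*$, $N=(\xi,\xi)-n$ and $\beta=\alpha$, $D=C$ as in Lemma~\ref{lem:tor}.

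\emph{Step 1 (the spatial connection).} I would set $\nabla\omega=\hat\nabla\omega-(\omega,\xi)\Xi$. Since $\Xi$ is central and $(\ ,\xi)$ is a bimodule map, $S(\omega)=-(\omega,\xi)\Xi$ is a bimodule map. Hence $\nabla$ is a bimodule connection with the same $\sigma$, and it is torsion free because $\wedge(\Xi)=0$. Using $\hat\nabla\cg=0$ and $g^1(g^2,\xi)=\xi$, one gets
\[\nabla\cg=-\Xi\tens\xi-\sigma(\xi\tens\Xi^1)\tens\Xi^2,\]
which is exactly (\ref{g1}). Moreover $\nabla\xi=\hat\nabla\xi-(\xi,\xi)\Xi$, so $N\Xi+\nabla\xi=\hat\nabla\xi-n\Xi$. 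The hypothesis on $\Xi$ is therefore precisely (\ref{g9}). Reality of $\nabla$ would follow from that of $\hat\nabla$ and the reality of $\Xi$, $\xi$.

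\emph{Step 2 (solving for $C,\alpha,\gamma,E$).} Contracting (\ref{g3}) on the left with $(\omega,\ )$ gives the formula
\[\alpha(\omega)=-(\omega,C)\xi-(\omega,\cdot)\big(\nabla\xi+N\Xi\big),\]
where the contraction is on the first tensor factor. Equation (\ref{g5}) gives $C=-(2N)^{-1}(\extd N+2\alpha(\xi))$. These are coupled only through the quantities $\alpha(\xi)$ and $(\xi,C)$. Substituting $\omega=\xi$ gives a scalar linear equation for $(\xi,C)$, whose coefficient involves $1+\ldots$ in terms of $N,(\xi,\xi)$ and $n$; I expect it to reduce to an invertible combination using the invertibility of $n$. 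That fixes $C$ and then $\alpha$ uniquely. Centrality of $\extd N$, (\ref{cenxi}) and centrality of $\hat\nabla\xi$ make $C$ central and $\alpha$ a bimodule map commuting with $*$.

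Similarly, (\ref{g6}) contracted with $(\omega,\ )$ gives $\gamma(\omega)=-(\omega,\dot\xi+E\xi+\alpha(\xi)+NC)$. Setting $\omega=\xi$ and inserting into (\ref{g8}) gives a linear equation for $E$ with coefficient of the form $2N-2(\xi,\xi)=-2n-\ldots$, again invertible. This determines $E$, hence $\gamma$, uniquely.

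Uniqueness comes from these contractions being forced. Existence needs checking that the contracted formulas really satisfy the uncontracted (\ref{g3}) and (\ref{g6}). I would check this by applying $g^1\tens(g^2,\ )$, i.e.\ using $g^1(g^2,\omega)=\omega$.

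\emph{Step 3 (the remaining equations).} Given (\ref{cenxi}), $\sigma(\xi\tens C)=C\tens\xi$. So (\ref{g2}) minus (\ref{g3}) is exactly $\sigma(g^1\tens\alpha(g^2))-\alpha(g^1)\tens g^2$, and (\ref{g6}) minus (\ref{g7}) is $g^1\gamma(g^2)-\gamma(g^1)g^2$. Thus (\ref{g2}) and (\ref{g7}) hold iff the two stated conditions hold. For (\ref{g4}), I would rewrite $g^1\tens\alpha(g^2)=\sigma^{-1}(\alpha(g^1)\tens g^2)$ under the first condition. Then (\ref{g3}) turns (\ref{g4}) into $\dot\cg=(\id+\sigma^{-1})(N\Xi+\nabla\xi)$, which holds by Step~1.

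The main obstacle is Step~2: checking that the coupled scalar equations for $(\xi,C)$ and $E$ have invertible coefficients from the invertibility of $n$ alone, and that the contracted solutions satisfy the full tensor equations. This is where I would compute carefully.
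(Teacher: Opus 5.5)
Your Steps 1 and 3 are exactly the paper's argument. That is, you make the same choice $\nabla=\hat\nabla-\Xi(\xi,\ )$ to absorb (\ref{g1}) and turn the hypothesis into (\ref{g9}), and you reduce (\ref{g2}), (\ref{g7}), (\ref{g4}) the same way using $\sigma(\xi\tens C)=C\tens\xi$. Step~2, however, goes wrong for $\alpha$. In (\ref{g3}) the unknown map acts on the \emph{first} leg, $\alpha(g^1)\tens g^2$. Pairing the first tensor factor with $(\omega,\ )$ therefore produces $(\omega,\alpha(g^1))g^2$, which is the metric transpose of $\alpha$, not $\alpha(\omega)$. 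The two differ in general, already classically: the $C\tens\xi$ term is not symmetric, and $\hat\nabla\xi$ has an antisymmetric part. So your formula for $\alpha(\omega)$ does not follow from (\ref{g3}). Feeding its value at $\omega=\xi$ into (\ref{g5}), which involves the genuine $\alpha(\xi)$, gives the wrong $C$, and the existence check you planned would then fail. For $\gamma$ you contracted correctly: there the map sits on the second leg, so $(\omega,g^1)\gamma(g^2)=\gamma(\omega)$.

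The repair, which is what the paper does, is to contract the \emph{second} factor. Since $\alpha$ is a right module map, $\alpha(g^1)(g^2,\omega)=\alpha(\omega)$, so (\ref{g3}) is equivalent to $\alpha(\omega)=-Y^1(Y^2,\omega)$ with $Y=C\tens\xi+\nabla\xi+N\Xi$. Conversely, for central $Y$ this formula defines a bimodule map and gives back (\ref{g3}) via $(Y^2,g^1)g^2=Y^2$, so no separate existence check is needed. At $\omega=\xi$ you get the 1-form equation $\alpha(\xi)=-(\xi,\xi)C-X^1(X^2,\xi)$, where $X=\nabla\xi+N\Xi=\hat\nabla\xi-n\Xi$. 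Substituting into (\ref{g5}) gives $2nC=\extd N-2X^1(X^2,\xi)$ directly, with coefficient exactly $2n$. There is no scalar reduction for $(\xi,C)$, and invertibility of $N$ is never used. The paper additionally uses metric compatibility of $\hat\nabla$ together with (\ref{cenxi}) to write $(\hat\nabla\xi)^1((\hat\nabla\xi)^2,\xi)={1\over 2}\extd(\xi,\xi)$, which turns this into the closed form (\ref{genC}). That step is a simplification rather than a necessity, since centrality of $C$ already follows from that of $\extd N$, $\hat\nabla\xi$, $\Xi$ and $\xi$. With $C$ and $\alpha$ obtained this way, your treatment of $E$ and $\gamma$ goes through, with the coefficient of $E$ exactly $-2n$.
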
 
\begin{proof}
We let $\sigma$ be the braiding for $\hat\nabla$ and set $\nabla=\hat\nabla+\beta$ for  $\beta(\omega)=-\Xi(\xi,\omega)$, which is a bimodule map by the centrality properties so that  $\nabla$ is a bimodule connection with the same $\sigma$. One can check that it is $*$-preserving by the assumed properties of $\Xi$ and $(\xi,\ )=(\ ,\xi)$, and also that $\nabla\xi$ is central. This choice of $\beta$ then solves (\ref{g1}). Moreover, given $\beta$,  the stated assumption on $\Xi$ is equivalent to the desired (\ref{g9}), so this is solved.  

Next, we note that 
\[(\nabla\xi)^1((\nabla\xi)^2,\xi)+ N\Xi^1(\Xi^2,\xi)=(\hat\nabla\xi)^1((\hat\nabla\xi)^2,\xi)-n\Xi^1(\Xi^2,\xi)={1\over 2}\extd(\xi,\xi)-n\Xi^1(\Xi^2,\xi)\]
where we used for the first equality the definition of $\beta$ and for the second equality we used (\ref{cenxi}) to remove the braiding in
\[ \extd(\xi,\xi)=(\hat\nabla\xi)^1((\hat\nabla\xi)^2,\xi)+(\id\tens (\xi, ))(\sigma(\xi\tens (\hat\nabla\xi)^1)\tens(\hat\nabla\xi)^2)= 2(\hat\nabla\xi)^1((\hat\nabla\xi)^2,\xi)\]
The first equality is an instance of the metric compatibility of $\hat\nabla$ in terms of $(\ ,\ )$ in \cite[Eq.~(8.9)]{BegMa}. 
With this in hand, we evaluate (\ref{g3})  against $( ,\xi)$ to obtain
\[ \alpha(\xi)+C(\xi,\xi)+ {1\over 2}\extd(\xi,\xi)-n\Xi^1(\Xi^2,\xi)=0.\]
We subtract half of (\ref{g5}) to get an equation for $C$, which immediately solves as 
\begin{equation}\label{genC} C=-{1\over 2n}\extd n+ \Xi^1(\Xi^2,\xi).\end{equation}
Note that $\extd(\xi,\xi)$ is central due to $\hat\nabla\xi$ central and the expression above, and we assumed $\extd N$, $\Xi$ are, hence $C$ is central. Given $C$,  (\ref{g3}) uniquely solves for $\alpha$ as $\cg$ is invertible, and the result is a bimodule map as $\nabla\xi$ is central. The other half (\ref{g2}) is then equivalent to the first stated condition. 

Similarly, given $C$, (\ref{g6}) becomes
\begin{equation}\label{g10} \dot\xi +E\xi- {\extd N\over 2} + g^1\gamma(g^2).\end{equation}
Evaluating against $(\xi, )$ and subtracting half of (\ref{g8}), we get an equation for $E$ which solves as
\begin{equation}\label{genE} E={1\over 2n}\left(\dot N - 2(\xi,\dot\xi)+ (\xi,\extd N)\right).\end{equation}
 Then (\ref{g10}) uniquely solves for $\gamma$ as $\cg$ is invertible, which then solves (\ref{g6}). The construction gives $E$ as central and from this that $\gamma$ is a bimodule map. The other half (\ref{g7}) is then equivalent to the second condition stated. 

Finally, it is clear from the formulae obtained that $C,E$ are self-adjoint. Then, as remarked above (comparing $*$ of (\ref{g3}) with (\ref{g2})) it follows that $\alpha$ commutes with $*$. Similarly for $\gamma$. \end{proof}

This lays out the solution strategy, which can also apply more loosely if some of the conditions are not met. The main issue is if we can find $\Xi$ obeying the stated condition, or $\Xi,\nabla$ such that (\ref{g9}) holds. We will see how this plays out for the specific spatial QRGs that we consider. Because  we don't know in general if we can solve for $\Xi$, it is useful to make the following definition.

\begin{definition} We distinguish three cases:

Type I:  $\xi=0$, $\nabla=\hat\nabla$. There are no `shifts'  so $\tilde \cg$ is block diagonal, and there may be are no constraints on $\dot\cg$ if we can always solve for $\Xi$.

Type II:   $\Xi=0$, $\nabla=\hat\nabla$. There are `shifts' but $\dot\cg$ is constrained by (\ref{g9}).

Type III: $\xi,\Xi\ne 0$. Then typically $\nabla$ is not a QLC for $\cg$ and there are are no constraints on $\dot\cg$ if we can solve for $\Xi$.

\end{definition}
There is a small intersection of Types I,II but this entails $\dot\cg=0$ and the solution is
\begin{equation}\label{type12} \alpha=0,\quad C=-{1\over 2N}\extd N,\quad E=-{1\over 2N}\dot N,\quad\gamma(\omega)={1\over 2}(\extd N,\omega)={1\over 2}(\omega,\extd N)\end{equation}
requiring for its existence that the two versions of $\gamma$ coincide. The general Type I case is not much harder, we just have to find $\Xi$ to match
\begin{equation}\label{type1a} \wedge(\Xi)=0,\quad \dot\cg=N(\id+\sigma^{-1})\Xi\end{equation}
then the rest of the QLC, if it exists, is prescribed by 
\begin{equation}\label{type1b} \alpha(g^1)\tens g^2=\sigma(g^1\tens \alpha(g^2))= -N\Xi\end{equation}
and $\gamma, C,E$ as in (\ref{type12}).

\section{Unique QLC in the central basis case} \label{seccen}

In this section, we look at a class of spatial QRGs where everything can be solved following the template of Propsition~\ref{prop:gen} and we have a spacetime QRG including a quantum Levi-Civita connection for every spacetime quantum metric $\cg$ of the ADM form. We start with the strongest assumptions and hence the easiest case, namely we assume:

(i) $A$ has trivial centre (to start with, later dropped);

(ii) $\Omega^1=A.\{s^i\}$ i.e. has a basis $s^i$ and these are self-adjoint and central;

(iii) The $s^i$ anticommute in $\Omega$ (which is then $A$ times their Grassmann algebra).

In this case $\cg=g_{ij}s^i\tens s^j$ needs $g_{ij}$ central by (ii) and hence constants in $A$ by (i) (they can still depend on $t$). Moreover, the reality condition becomes clearly that $\overline{g_{ij}}=g_{ji}$, i.e. a hermitian matrix. We also suppose quantum symmetry $\wedge(g)=0$ which given (iii) means $g_{ji}=g_{ij}$, i.e. a real matrix. It should also be non-degenerate and in most case we would assume positive (though this is not needed here).  We could just impose symmetry of the $g_{ij}$ directly in place of (iii), as this is what we really assume. The inverse metric is given by the  matrix inverse $g^{ij}=(s^i,s^j)$.

In this context, it is natural to impose symmetry of $\tilde g$ under $\wedge$ (or directly) as $\eta=\xi$, and we do so henceforth. Centrality of $\tilde g$ needs $\xi=\xi_is^i$ central, hence $\xi_i$ constant in $A$ by (i), and ditto for $N$. Note that in the symmetric case,  (\ref{g6})-(\ref{g7}) are the same as $\gamma(g^1)\tens g^2=\gamma^i g_{ij}= g_{ij}\gamma^j=g^1\gamma(g^2)$ where $\gamma=\gamma_is^i$ must have $\gamma_i$ constants in $A$ by (i) and that $\gamma$ is a bimodule map in the assumptions of Proposition~\ref{prop:sig}. Similarly from there and (i), we have $\alpha(s^i)=\alpha^i{}_js^j$, $C=C_is^i$ and $\Xi=\Xi_{ij}s^i\tens s^j$ with coefficients, as well as $E$ and the coefficients of $\sigma(s^i\tens s^j)=\sigma^{ij}{}_{kl}s^k\tens s^l$, all for now constant in $A$. Hence, under the assumptions (i)-(iii), everything reduces to a system of equations for matrices, vectors and functions of $t$ only!  Moreover, for simplicity we focus on the case where

(iv)  $\sigma(s^i\tens s^j)=s^j\tens s^i$, i.e. the flip on the basis 1-forms.

This is not the only case of interest but suffices for the fuzzy sphere example later. We also recall that  
\begin{equation}\label{n} n:= (\xi,\xi)-N.\end{equation}
is the actual (squared) lapse function and we have repurposed $N$ for the coefficient of $\extd t\tens\extd t$ in $\tilde g$. We assume that $N, n$ are always invertible during the evolution.

\begin{theorem}\label{thm:cen} For $(A,\cg,\hat\nabla)$ a QRG with the assumptions (i)-(iv) and if $N$, $n$ are invertible then the quantum metric (\ref{tg}) has a unique QLC $\tilde\nabla$.
\end{theorem}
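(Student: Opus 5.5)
The plan is to reduce everything to the template of Proposition~\ref{prop:gen}. Under (i)--(iv), every unknown becomes a matrix, vector or scalar that depends only on $t$. Existence then follows from explicit formulae, and uniqueness from the fact that each step of that proposition's proof solves a linear equation with an invertible coefficient.

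\textbf{Checking the hypotheses of Proposition~\ref{prop:gen}.}
\begin{itemize}
\item Since $N$ is constant in $A$, we have $\extd N=0$, which is trivially central.
\item Since $g^{ij}=(s^i,s^j)$ is a symmetric constant matrix and the basis is central, $(\xi,\ )=(\ ,\xi)$.
\item Since $\sigma$ is the flip on the central basis, $\sigma(\xi\tens\omega)=\omega\tens\xi$. Thus (\ref{cenxi}) holds.
\item On symmetric tensors with constant coefficients, $\id+\sigma^{-1}$ acts as multiplication by $2$, and $\wedge(\Xi)=0$ means $\Xi_{ij}=\Xi_{ji}$ by (iii).
\end{itemize}
Write $\hat\nabla s^i=-\Gamma^i{}_{jk}s^j\tens s^k$ with constant $\Gamma$, so that $\hat\nabla\xi=-\xi_i\Gamma^i{}_{jk}s^j\tens s^k$. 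The equation for $\Xi$ then has the unique solution
\[ \Xi_{jk}={1\over n}\Big(-\xi_i\Gamma^i{}_{(jk)}-{1\over 2}\dot g_{jk}\Big),\]
where $(jk)$ denotes symmetrisation. This uses that $n$ is invertible. The solution is real, symmetric and central, so the reality condition of Corollary~\ref{cor:star}(2) holds, since $\sigma\circ{\rm flip}(*\tens *)$ fixes real symmetric tensors in a self-adjoint central basis. Proposition~\ref{prop:gen} then produces the following, all with constant coefficients:
\begin{itemize}
\item $\nabla=\hat\nabla+\beta$ with $\beta(\omega)=-\Xi(\xi,\omega)$;
\item $C$ from (\ref{genC}) and $E$ from (\ref{genE});
\item $\alpha$ from (\ref{g3}) and $\gamma$ from (\ref{g6}), each uniquely solvable because $g$ is invertible.
\end{itemize}

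\textbf{The two residual conditions hold automatically.} Write $\alpha(s^j)=\alpha^j{}_ks^k$.
\begin{itemize}
\item The $s^i\tens s^k$ coefficient of $g^1\tens\alpha(g^2)$ is $(g\alpha)_{ik}$. After the flip this becomes $(g\alpha)^T=\alpha^Tg$, using $g^T=g$. This is exactly the coefficient matrix of $\alpha(g^1)\tens g^2$, so $\sigma(g^1\tens\alpha(g^2))=\alpha(g^1)\tens g^2$.
\item Writing $\gamma(s^i)=\gamma^i$, both $g^1\gamma(g^2)$ and $\gamma(g^1)g^2$ equal $g_{ij}\gamma^j s^i$ by symmetry of $g$.
\end{itemize}
Hence (\ref{g2}) and (\ref{g7}) follow from (\ref{g3}) and (\ref{g6}). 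This gives existence of a QLC, including the $*$-condition via Corollary~\ref{cor:star}.

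\textbf{Uniqueness, and the main obstacle.} For uniqueness I would reverse the argument. By Lemma~\ref{lem:tor}, any torsion-free $\tilde\nabla$ has the form (\ref{tcon1})--(\ref{tcon2}) with $\alpha=\beta$, $C=D$ and $\wedge(\Xi)=0$. Metric compatibility then forces the following chain:
\begin{itemize}
\item (\ref{g9}) determines $\Xi$ once $\nabla$ is fixed.
\item The pair of evaluations of (\ref{g3}) and (\ref{g5}) against $(\ ,\xi)$ determines $C$.
\item Similarly, (\ref{g6}) and (\ref{g8}) determine $E$.
\item Invertibility of $g$ then fixes $\alpha$ and $\gamma$.
\end{itemize}
For the spatial part, equation (\ref{g1}) says $\nabla$ is a torsion-free connection compatible with $g$ up to the term involving $\Xi$. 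In the constant-coefficient, flip-$\sigma$ setting, the classical Koszul argument on constant Christoffel symbols gives a unique solution, namely $\hat\nabla+\beta$.

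The step I expect to be hardest is justifying that a QLC must satisfy the centrality and bimodule-map hypotheses of Proposition~\ref{prop:sig}, which were only stated as sufficient. I would expand the formulae for $\tilde\sigma$ from that proof in the basis. The commutator terms $[a,\Xi_{ij}]$, $[a,C_i]$ and so on must then be expressible through $\partial_k a$ alone for $\tilde\sigma$ to be well defined. I would try to combine this with metric compatibility, where $g_{ij}$, $\xi_i$ and $N$ are constants, and with triviality of the centre, to force these coefficients to commute with $A$ and hence be constants in $A$. If this proves too delicate, I would state uniqueness within the class of connections covered by Proposition~\ref{prop:sig}.
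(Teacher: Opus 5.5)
Your proposal is correct and follows the paper's own route. The paper likewise solves (\ref{g9}) for the symmetric central $\Xi$ (your formula agrees with (\ref{Xicen})), sets $\nabla=\hat\nabla+\beta$ with $\beta(\omega)=-\Xi(\xi,\omega)$, and then determines $C,\alpha,E,\gamma$ as in Proposition~\ref{prop:gen}. It remarks that one could simply invoke that proposition, as you do, and it verifies the two residual conditions via the symmetry of $g_{ij}$ and the flip $\sigma$, exactly as you check. As you suspected, the paper's uniqueness is also only within the ansatz of Proposition~\ref{prop:sig} with flip spatial braiding, so your fallback is what is actually proved there; your Koszul argument for $\beta$ makes explicit a step the paper handles by simply choosing $\beta$.
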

\begin{proof} We solve for the QLC in stages, finding it explicitly in terms of the initial data and a free choice of $\cg,N,\xi$ subject to our invertibility and `reality' (self-adjointness) conditions. The first step is the critical one after which we could appeal to Proposition~\ref{prop:gen}, but we proceed in a more  self-contained manner to arrive at more explicit formulae for our case. 

(1) Under our assumptions, the condition $\wedge(\Xi)=0$ for zero torsion combined with (iii) says that $\Xi_{ij}$ are symmetric in their indices. Then, given (iv), $\sigma$ acts as the identity, hence we can immediately solve (\ref{g1}) by
\[ \Xi_{ij}={1\over 2N} \left(\dot g_{ij}+\xi_k(\Gamma^k{}_{ij}+\Gamma^k{}_{ji})\right),\]
where $\nabla s^i=-\Gamma^i{}_{jk}s^j\tens s^k$. Although we do not know that $\Gamma^i{}_{jk}$ are necessarily central (hence constants if we assume (i)), we know that $\nabla\xi$ is in the centre as $\xi$ is. This is because  $[a,\nabla\xi]=\sigma([a,\xi])+\sigma(\xi\tens\extd a)-\extd a\tens\xi=0$ since $\xi$ is in the centre and by our assumption (iv). Hence $\xi_k\Gamma^k{}_{ij}$ are in the centre, so this defines $\Xi_{ij}$ also in the centre. 

Note that if $\nabla$ is a QLC then we must be in Type I or Type II. For in this case, if $\xi=0$ then $(\ref{g1})$  is empty. Otherwise, (\ref{g1}) can be written as $\Xi_{ij}\xi+\Xi_{ik}\xi_j s^k=0$ for all $i,j$. Let $\xi_j$ be a nonzero component, then we see that $\Xi_{ik}s^k$ is a 1-form proportional to $\xi$. Hence $\Xi=\zeta\tens\xi$ for some 1-form $\zeta$. Then (\ref{g1}) become $\zeta\tens (\xi\tens\xi)=0$, so $\Xi=0$. 

(2) Hence to proceed more generally, we let 
\[ \nabla=\hat\nabla + \beta;\quad \beta:\Omega^1\to \Omega^1\tens_A\Omega^1,\]
where $\hat\nabla$ is a QLC for $\cg$ and suppose that $\beta$ is a bimodule map. This is required by the connection properties if we want the same $\sigma$ and will be sufficient for the solution. We then set
\[ \beta(g^1)\tens g^2=-\Xi\tens\xi,\quad  g^1\tens\beta(g^2)=-\xi\tens\Xi,\]
which are equivalent (given that $\cg$ is invertible) to 
\begin{equation}\label{betacen} \beta(\omega)=-(\omega,\xi)\Xi=-\Xi(\xi,\omega)\end{equation}
for all $\omega\in \Omega^1$. The two halves are equivalent because $\Xi$ is central and $(\omega,\xi)=(\xi,\omega)$ because $\xi$ is central and $g^{ij}$ is symmetric. This then solves (\ref{g1}) and hence allows us to write $\Xi_{ij}$ in terms of $\beta$ and the corresponding $\hat\Gamma^i{}_{jk}$ for $\hat\nabla$. Doing so then gives
\begin{equation}\label{Xicen} \Xi_{ij}=-{1\over 2n}\left(\dot g_{ij}+\xi_k(\hat\Gamma^k{}_{ij}+\hat\Gamma^k{}_{ji})\right)\end{equation}
provided $n\ne 0$ as we assume.

(3) Next, observe using that $g_{ij}$ is symmetric for the second equality,
\[ \alpha(g^1)\tens g^2=\alpha^j{}_i g_{jk} s^i\tens s^k=\sigma(g_{kj}\alpha^j{}_i s^k\tens  s^i)=\sigma(g^1\tens \alpha(g^2)).\]
Hence, given our $\tilde g$ symmetry and $\sigma$ assumptions, (\ref{g2}) and (\ref{g3}) are the same so we just need to solve one of them. We also need $N$ to be invertible for $\tilde g$ to be, and it is constant in spatial directions due to (i) so $\extd N=0$. Hence (\ref{g5}) is solved by
\[ C= -N^{-1} \alpha(\xi)\]
and (\ref{g3}) by defining $\alpha$  according to
\[ \alpha(g^1)\tens g^2=-\nabla\xi-C\tens\xi-N\Xi\]
which requires 
\[ \alpha(\xi)=-\hat\nabla(\xi)^1(\hat\nabla(\xi)^2,\xi)-\beta(\xi)^1(\beta(\xi)^2,\xi)- C(\xi,\xi) - N\Xi^1(\Xi^2,\xi)\]
where $\Xi=\Xi^1\tens\Xi^2$ and ditto for $\hat\nabla(\xi),\beta(\xi)$ is notation. The $\beta$ term and the last term combine and we as in the proof of Proposition~\ref{prop:gen} then use metric compatibility. In our present case it is clear that the braiding $\sigma$ of $\xi$ with anything is the flip and likewise $(\xi,\ )=(\ ,\xi)$, given that $\xi$ is central. Then  $\extd(\xi,\xi)= \hat\nabla^1\xi(\hat\nabla^2\xi,\xi)+ \hat\nabla^1\xi(\xi,\hat\nabla^2\xi)=2 \hat\nabla^1\xi(\hat\nabla^2\xi,\xi)$ in our case (even if there is a centre). But $(\xi,\xi)$ is a constant under our assumption (i) so $\extd(\xi,\xi)=0$ and we can drop the first term in $\alpha(\xi)$. This can also be seen directly where, by the same arguments as in \cite{LirMa}, one has that metric compatibility under our assumptions amounts to 
\begin{equation}\label{gGamma} \hat\Gamma_{ijk}+\hat\Gamma_{kji}=0\end{equation}
with indices lowered/raised by $g_{ij}$ or its inverse $g^{ij}$.  Using our formula for $\beta(\xi)$ and inverting $C$ in terms of $\alpha(\xi)$ to eliminate the latter,
we arrive at 
\begin{equation}\label{Ccen}  C=\Xi^1(\Xi^2,\xi),\quad \alpha(g^1)\tens g^2=-\hat\nabla\xi + n\Xi - \Xi^1(\Xi^2,\xi)\tens\xi\end{equation}
again provided $n\ne 0$. This is in line with (\ref{genC}).

(4) The remaining 2 equations (since (\ref{g6}),(\ref{g7}) are the same under our symmetry assumptions) as solved by first evaluating (\ref{g7}) against $(\ ,\xi)$ to find $\gamma(\xi)$ and then solving for $E$, giving 
\[ \gamma(\xi)=-(\dot\xi+E\xi,\xi),\quad E={\dot N-2(\dot\xi,\xi)\over 2n},\]
and $E$ then gives $\gamma(g^1)g^2$ from (\ref{g7}). This is in line with (\ref{genE}). Next observe that
\[ {\extd (\xi,\xi)\over\extd t}=(\dot\xi,\xi)+ (\xi,\dot\xi)- \xi_i\xi_j g^{ik}g^{lj}\dot g_{kl}=2(\dot\xi,\xi)\]
because the $\dot g_{kl}$ term, on substituting in terms of $\hat\Gamma$ due to the stated evolution for the metric, vanishes by the antisymmetry in (\ref{gGamma}). In this way, we obtain
\begin{equation}\label{Ecen}
E=-{\dot n\over 2 n}+(\Xi,\xi\tens\xi),\quad \gamma(\omega)=-(\dot\xi+E\xi,\omega).
\end{equation}

One can check that this solution also meets all the conditions with respect to $*$ in Corollary~\ref{cor:star}. Here, $s^i$ self-adjoint and $\sigma$ the flip in basic forms  requires $\Xi_{ij}$ to be real as they are, as well as $N$, $\xi_i$, $\alpha^i{}_j$ real as they are. Here, $g_{ij}$ are real as part of the spatial QRG respecting $*$ and likewise for the $\hat\Gamma^i{}_{jk}$ under assumption (iv). 
 \end{proof}
 
Putting everything in matrix terms, we can summarise our solution contained in the above constructive proof as
  \begin{equation}\label{nablacen1} \nabla s^i=\hat\nabla s^i-\xi^i \Xi_{jk}s^j\tens s^k+ \alpha(s^i)\tens_S\extd t -g^{ij}(\dot\xi_j-{\dot n\over 2n}\xi^j)\extd t\tens\extd t,\end{equation}
 \begin{equation}\label{nablacen2} \nabla \extd t=\Xi_{ij}s^i\tens s^j+\Xi_{ij}\xi^j\tens_S\extd t+ \left(\xi^i\xi^j\Xi_{ij}- {\dot n\over 2n}\right)\extd t\tens\extd t, \end{equation}
 where we raise or lower with $g^{ij},g_{ij}$, $\Xi_{ij}$ is in (\ref{Xicen}) and
 \begin{equation}\label{alphascen} \alpha(s^i)=\left( \xi_l\hat\Gamma^l{}_{kj}g^{ij} + \Xi_{kj}(n g^{ij}- \xi^i\xi^j)\right)s^k.\end{equation}

As mentioned in the introduction, we have adjoined a single commutative variable $t$ but this can be replaced by a commutative spatial variable or even a classical manifold coordinate algebra, which is then the Kaluza-Klein analysis for a product (or tensor product of coordinate algebras)  as achieved for the fuzzy sphere in \cite{LiuMa}.  

\subsection{Special cases and application to the fuzzy sphere}

The general solution has two interesting special cases where  $\nabla=\hat\nabla$ is an chosen QLC for $\cg$, ie the where the spacetime QLC restricts to the
spatial one. 

\begin{corollary}\label{cor:diag} (Type I special case)  If $\xi=0$ (i.e.,  if we do not seek to introduce `shift' cross terms) then we can choose $\cg(t)$ and $N(t)$ freely with spacetime QLC
\[\tilde\nabla s^i=\nabla s^i -{1\over 2} g^{ij}\dot g_{jk}s^k \tens_S\extd t,\quad \tilde\nabla\extd t={1\over 2N}(\dot g -\dot N \extd t\tens\extd t).\]
\end{corollary}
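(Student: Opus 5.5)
Specialise the explicit solution of Theorem~\ref{thm:cen}, summarised in (\ref{nablacen1})--(\ref{alphascen}), to $\xi=0$. Uniqueness is then inherited from Theorem~\ref{thm:cen}, since its hypotheses (i)--(iv), together with invertibility of $N$ and $n=-N$, still hold. With $\xi=0$ we have $n=(\xi,\xi)-N=-N$. In particular $\dot n/n=\dot N/N$ and $\xi^i=0$.

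The key steps, in order, are as follows.

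First, from (\ref{Xicen}), the $\hat\Gamma$ terms drop out and
\[
\Xi_{ij}={1\over 2N}\dot g_{ij},
\qquad\text{i.e.}\qquad
\Xi={1\over 2N}\dot\cg .
\]
Next, (\ref{betacen}) gives $\beta=0$, so $\nabla=\hat\nabla$; this is the Type I case. From (\ref{Ccen}) we get $C=0$.

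From (\ref{alphascen}),
\[
\alpha(s^i)=\Xi_{kj}\,n\,g^{ij}s^k=-{1\over 2}g^{ij}\dot g_{jk}s^k,
\]
using the symmetry of $\dot g_{ij}$. From (\ref{Ecen}), $E=-\dot N/(2N)$. Finally $\gamma=0$, since the $\extd t\tens\extd t$ coefficient in (\ref{nablacen1}) vanishes when $\xi=0$ and $\dot\xi=0$.

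Substituting these values into (\ref{nablacen1})--(\ref{nablacen2}) gives exactly the stated $\tilde\nabla s^i$ and
\[
\tilde\nabla\extd t=\Xi+E\,\extd t\tens\extd t={1\over 2N}\bigl(\dot\cg-\dot N\,\extd t\tens\extd t\bigr).
\]

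As an independent check, I would verify the equations of Theorem~\ref{thm:nabg} directly with $\xi=\eta=0$, $C=0$, $\gamma=0$:
\begin{itemize}
\item (\ref{g1}) reduces to $\hat\nabla\cg=0$.
\item (\ref{g2}) and (\ref{g3}) reduce to $\alpha(g^1)\tens g^2=-N\Xi=-{1\over2}\dot\cg$, which the stated $\alpha$ satisfies. Under (iv), symmetry of $g_{ij}$ makes the $\sigma$-version coincide with this.
\item (\ref{g4}) becomes $\dot\cg+2\alpha(g^1)\tens g^2=0$, which holds.
\item (\ref{g5}) holds since $\extd N=0$ by (i).
\item (\ref{g6}) and (\ref{g7}) are trivially satisfied.
\item (\ref{g8}) reads $\dot N+2EN=0$, which holds.
\end{itemize}

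The only delicate points are sign bookkeeping, because $N$ and $n$ differ by a sign here, and confirming that $\cg(t)$ and $N(t)$ remain free. The latter holds because no constraint on $\dot\cg$ arises: $\Xi$ is simply defined by $\dot\cg$.
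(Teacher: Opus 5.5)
Your proposal is correct and follows essentially the paper's own route. You specialise the explicit solution of Theorem~\ref{thm:cen} to $\xi=0$ (so $n=-N$), obtaining $\Xi=\dot\cg/(2N)$, $C=\gamma=0$, $E=-\dot N/(2N)$ and $\alpha(g^1)\tens g^2=-\frac12\dot\cg$, which is exactly what the paper does; your direct check of (\ref{g1})--(\ref{g8}) is correct but redundant.
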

\begin{proof} The general solution immediately reduces to 
\[ C=\gamma=0,\quad E=-{\dot N\over 2 N},\quad \Xi={1\over 2} N^{-1}\dot\cg,\]
\[ \alpha(\omega)=\alpha(g^1)(g^2,\omega)=-N\Xi^1(\Xi^2,\omega)=-{1\over 2}\dot g^1(\dot g^2,\omega)\]
which translates into the $\tilde\nabla$ stated in our case since $(s^i,s^j)=g^{ij}$ is the inverse matrix to $g_{ij}$ so 
$\alpha(s^i)= -{1\over 2}\dot g_{jk} s^j(s^k,s^i)= -{1\over 2}\dot g_{jk}g^{ki}s^j$.   \end{proof}

For a concrete example, we take the unit fuzzy sphere. This has coordinate algebra 
\[   [y^i,y^j]=2\imath\lambda_p \eps_{ijk}y^k,\quad \sum_i (y^i)^2=1-\lambda_p^2 \]
with a central basis of three 1-forms $\{s^i\}$ for the rotationally invariant calculus given in \cite[Ex.~1.46]{BegMa}. The parameter $\lambda_p$ is dimensionless but if it has value $1/n$ then additional relations  can be add that render the above into an algebra isomorphic to $M_n(\C)$ via the standard $n$-dimensional matrix representation of $U(su_2)$. In both cases, the algebra has trivial centre when $\lambda_p\ne 0$ and the $\{s^i\}$ anticommute in the exterior algebra. One also has $\sigma(s^i\tens s^j)=s^j\tens s^i$ for the QLC. Hence all the conditions in the preceding two sections apply and we can immediately write down the spacetime QRG from Corollary~\ref{cor:diag} in the easy case where $\xi=0$ (so that the spacetime quantum metric is block-diagonal) for any $N(t),\cg(t)$ chosen freely (but invertible). A special case $N=-1$ recovers \cite[Thm.~3.1]{ArgMa}, such as the 3D FLRW model with $g_{ij}(t)=R^2(t)\delta_{ij}$ for an expanding fuzzy sphere.

\begin{corollary}\label{cor:evg}  (Type II special case) If $\Xi=0$ then $\xi(t),N(t)$ force a specific evolution for the spatial metric, 
\[  \dot g_{ij}=  -\xi_k(\Gamma^k{}_{ij}+ \Gamma^k{}_{ji}).\]
The spacetime QLC in this case is
\[ \tilde\nabla s^i=\nabla s^i+  g^{ij}\xi_l\Gamma^l_{kj}s^k\tens_S\extd t- g^{ij}(\dot\xi_j- {\dot n\over 2 n}\xi_j)\extd t\tens\extd t,\quad \tilde\nabla \extd t= -{\dot n\over 2 n}\extd t\tens\extd t.\]
\end{corollary}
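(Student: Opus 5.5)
The plan is to specialise the general solution of Theorem~\ref{thm:cen}, as summarised in (\ref{nablacen1})--(\ref{alphascen}), to the case $\Xi=0$. Since $\nabla=\hat\nabla+\beta$ with $\beta(\omega)=-\Xi(\xi,\omega)$ by (\ref{betacen}), setting $\Xi=0$ gives $\beta=0$. Hence $\nabla=\hat\nabla$ is the spatial QLC, and I write its Christoffel symbols simply as $\Gamma^i{}_{jk}$.

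The first step is the constraint on the metric. Equation (\ref{Xicen}) expresses $\Xi_{ij}$ as $-\frac{1}{2n}$ times $\dot g_{ij}+\xi_k(\hat\Gamma^k{}_{ij}+\hat\Gamma^k{}_{ji})$. Since $n$ is invertible, $\Xi=0$ holds exactly when that bracket vanishes, which is the stated evolution equation. Equivalently, I can read it directly from (\ref{g9}) with $\Xi=0$ and $\sigma$ the flip: this gives $\dot\cg=(\id+{\rm flip})\nabla\xi$, and with $\nabla\xi=-\xi_k\Gamma^k{}_{ij}s^i\tens s^j$ this is the same equation. I would also note that the condition is consistent with the requirement that $\dot g_{ij}$ be symmetric and central. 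Centrality holds because $\nabla\xi$ is central, as shown in step (1) of the proof of Theorem~\ref{thm:cen}.

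Next I substitute $\Xi=0$ into the remaining formulae. From (\ref{Ccen}) we get $C=0$. From (\ref{alphascen}) we get $\alpha(s^i)=\xi_l\Gamma^l{}_{kj}g^{ij}s^k$, and this gives the $\tens_S\extd t$ term. From (\ref{Ecen}) we get $E=-\dot n/(2n)$. The $\gamma$ term is $-(\dot\xi+E\xi,s^i)=-g^{ij}(\dot\xi_j-\frac{\dot n}{2n}\xi_j)$, which gives the $\extd t\tens\extd t$ coefficient in $\tilde\nabla s^i$. In (\ref{nablacen2}) all the $\Xi$ terms drop out, leaving $\tilde\nabla\extd t=-\frac{\dot n}{2n}\extd t\tens\extd t$.

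The only delicate point is checking that the derivation of $E$ in Theorem~\ref{thm:cen} remains valid. It used $\frac{d}{dt}(\xi,\xi)=2(\dot\xi,\xi)$, which relies on $\xi_i\xi_j\dot g^{ij}$ vanishing under the evolution via (\ref{gGamma}). Under the present evolution, $\xi^k\xi^l\dot g_{kl}=-2\xi^k\xi^l\xi_m\Gamma^m{}_{kl}=-2\Gamma_{mkl}\xi^m\xi^k\xi^l$. This vanishes because it is fully contracted against a tensor antisymmetric in its first and third indices. So the formula for $E$ holds and the stated connection follows. Note also that $\extd N=0$ and $(\xi,\xi)$ is constant in $A$ by (i).
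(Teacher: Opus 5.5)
Your proposal is correct and follows the paper's own proof. You specialise the general solution of Theorem~\ref{thm:cen} at $\Xi=0$: the evolution comes from (\ref{Xicen}), then $C=0$ and $\alpha(g^1)\tens g^2=-\nabla\xi$, and $E,\gamma$ come from (\ref{Ecen}). Your extra check that $\xi^k\xi^l\dot g_{kl}=0$ is correct but not needed, because (\ref{Ecen}) as stated already holds for general $\Xi$ and simply loses its $(\Xi,\xi\tens\xi)$ term.
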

\begin{proof} The general solution immediately reduces to the stated evolution from (\ref{Xicen}), $C=0$, $\alpha(g^1)\tens g^2=-\nabla\xi$, with (\ref{Ecen}) unchanged. \end{proof}

To apply this, we need the spatial QLC and there is a unique one on the fuzzy sphere with constant coefficients, given by\cite{LirMa}
\[ \Gamma^i{}_{jk}=g^{il}\left(\eps_{lkm}g_{mj}+ {{\rm Tr}(g)\over 2}\eps_{ljk}\right).\]
Hence we can choose $\xi(t), N(t)$ as we like (up to the invertibility requirements) and also $\cg(0)$ how we like, and solve the stated evolution,  which is now 
\[ \dot g_{ij}= -\xi_lg^{lk}(\eps_{kim}g_{mj}+ \eps_{kjm}g_{im})= - L_{\vec N(t)}(g)_{ij}\]
where the trace term drops out and we let $N^k:=\xi_l g^{lk}$ be the corresponding shift vector. We recognise $N^k$ times the expression in brackets as an  infinitesimal rotation $L_{\vec N}$ at time $t$ about axis $\vec N(t)$, acting on the 2-index tensor $g_{ij}$. This means that ${\rm Tr}(\cg)$ is constant under the evolution while the traceless part rotates under the action of $\vec N$ in the 5-dimensional spin 2 representation. This also means, for example, that we cannot assume that $\cg$ is diagonal because even if it starts that way, it will not remain diagonal under the evolution. Although by no means necessary for a spacetime QLC, this evolution is necessary if we want it to restrict to the spatial QLC and want to have nonzero $\xi(t)$, i.e., the Type II case. One has a similar phenomenon in classical geometry.

\subsection{Extension to nontrivial centre}\label{secZ}

Although we focussed so far on the case of trivial centre, most of the analysis in Theorem~\ref{thm:cen} works the same with nontrivial centre $Z(A)$, with $g_{ij},\xi_i, N,n,C_i,\Xi_{ij},E$ now lying here. The main difference is in the side-note that if $\nabla=\hat\nabla$ then we must by in Type I or Type II, which is no longer true as the product of two nonzero elements of $Z(A)$ could  be zero (e.g. for functions on a manifold, they could just have disjoint support). Here we just outline the extension of the proof of the theorem to this more general case using the same numbering of the steps, and list the results as a natural solution for the spacetime QLC. 

(1)/(2) $\Xi_{ij}$ is a still symmetric under assumptions (ii)-(iii) as $s^is^j$ with $i<j$ say are a basis of $\Omega^2$. Then argument for $\Xi_{ij}$ is unaffected and still makes sense as noted in the proof but now $\nabla\xi$ has derivative terms as $\xi_i$ need not be constant. The definition of $\beta$ still makes sense unchanged and we arrive now at
\begin{equation}\label{XiZ} \Xi_{ij}=-{1\over 2n}\left(\dot g_{ij}-(\del_i\xi_j+\del_j\xi_i-\xi_k(\hat\Gamma^k{}_{ij}+\hat\Gamma^k{}_{ji}))\right),\end{equation}
where the partials are defined by $\extd a=(\del_i a)s^i$  for any $a\in A$. 

(3) Get modified to 
\[ C=- N^{-1}({\extd N\over 2}+\alpha(\xi)),\quad\alpha(\xi)=-{1\over 2}\extd(\xi,\xi)+n\Xi^1(\Xi^2,\xi)\]
as $N, (\xi,\xi)$ need not be constant. Combining their differentials, we arrive at
\begin{equation} \label{CZ} C=-{1\over 2n}\extd n+ \Xi^1(\Xi^2,\xi),\quad  \alpha(g^1)\tens g^2=-\hat\nabla\xi+  {1\over 2 n}\extd n\tens\xi+ n\Xi-\Xi^1 (\Xi^2,\xi)\tens\xi.\end{equation}

(4) Gets modifed due to the change of $C$ to
\[  E={\dot N-2(\dot\xi,\xi)+(\extd N,\xi)\over 2 n},\quad \]
in the first instance and a similar adjustment to $\gamma(g^1)g^2$ in (\ref{g7}). Now when we do the same calculation we have
\[ {\extd (\xi,\xi)\over\extd t}=2(\dot\xi,\xi)-\xi^i\xi^j(-2 n\Xi_{ij}+2\del_i\xi_j-\xi^k \del_i g_{kj})= 2(\dot\xi,\xi)-2n(\Xi,\xi\tens\xi)- (\extd(\xi,\xi),\xi)\]
which now gives us
\begin{equation}\label{EZ} E=-{\dot n+ (\extd n,\xi)\over 2n}+ (\Xi,\xi\tens\xi),\quad \gamma(\omega)=- (\dot\xi+E\xi- {1\over 2}\extd N,\omega)\end{equation}
for all $\omega\in \Omega^1$. This completes the general result for the QLC without assuming trivial centre. These can be put into (\ref{tcon1})-(\ref{tcon2}) and computed in matrix terms as we did after Theorem~\ref{thm:cen}. In $\Xi_{ij}$, we now have covariant derivatives of $\xi$. In $\alpha, C$ we now have an extra $\extd n$ term, in $E$ we have an extra $\extd n$ terms and in $\gamma$ we have this from $E$ and an additional $\extd N$ term.

\begin{example} For an example of a spatial QRG with nontrivial centre, we can first use the same general construction as in Theorem~\ref{thm:cen} (or more generally as above) but with a spatial classical variable, say $r$, in place of $t$. For example, we can obtain a spatial quantum Riemannian geometry
\[ \cg= r^2 g_{ij} s^i\tens s^j+H(r)\extd r\tens\extd r,\]
\[ \nabla s^i=\nabla_{S^2} s^i - {\extd r\over r} \tens_s s^i,\quad \nabla\extd r={r\over H(r)}g_{ij} s^i\tens s^j-{1\over 2H}H'(r)\extd r\tens \extd r\]
by using Corollary~\ref{cor:diag} with $N(t)$ replaced by $H(r)$ and $\cg(r)=r^2 g_{ij}s^i\tens s^j$ as the $r$-dependent spatial metric given by scaling the initial $\cg_{S^2}=g_{ij}s^i\tens s^j$ of the unit fuzzy sphere (where $g_{ij}$ is now not being used for the coefficient matrix of $\cg(r)$).  Application of the more general result in this section to {\em this} spatial geometry -- which has as centre  the algebra of functions in $r$ -- to now adjoin $t$ with $N(r)=-\beta(r)$ then exactly recovers \cite[Thm~4.1]{ArgMa} as used there to obtain a fuzzy black hole for the relevant  $H(r),\beta(r)$. \end{example}

On the other hand, that static case for the fuzzy black-hole is a very special case as $N$  and $\cg$ in the second iteration do not even depend on $t$. Our new results, even in the diagonal Type I case, are much more powerful and allow us to construct many more quantum spacetimes complete with QLCs, for example with time-dependent $g_{ij}$ on the unit fuzzy sphere, and time dependent $H(r,t), N(r,t)$ in models like the above. Moreover, we are not limited to the block-diagonal form and and can introduce off diagonals with nontrivial $\xi$. Note that  a claim at the start of \cite[Sec.~3]{ArgMa} that off-diagonals are not allowed was an error, simply the diagonal case was a lot easier but, as we have now shown, it is not forbidden. 

\section{Some solutions for a discrete spatial QRG}\label{secdisc}

We now return to the general setting of Section~\ref{secmain} and see how the general approach to solving for a QLC there works when applied to a complementary class of spatial QRGs compared to the ones of Section~\ref{seccen}, namely discrete geometry where $A$ is functions on a graph. Here $\Omega^1$ does not have any central 1-forms at all, since a vector space basis of it is labelled by the graph arrows and these are each noncommutative (since the product by a function from the  left or right is the value at the source or target vertex of the arrow). A metric is just a real length on every edge (in the edge-symmetric case that the length does not depend on the arrow direction), see \cite[Chap.~1]{BegMa}. This is a different kind of symmetry than $\wedge(\cg)=0$ but seems better suited when one tries to solve for a QLC, albeit may need further modification when the graph has boundary legs, for example. We will limit attention to $\Z$ or $\Z_n$ where the QLC is known for any edge-symmetric metric\cite{Ma:haw,ArgMa0,BegMa}. We follow the template of Proposition~\ref{prop:gen} but in a self-contained way and to arrive at more explicit formulae. 

\subsection{Construction of a spatial QRG with $\Z_n$ for the angular geometry}\label{secZn}

In this section, we first take $\Z_n$ as a `discrete circle' and extend this by time $t$ as a Type I solution. There can be no  Type II solutions here as the QRG for $\Z_n$ does mot have any central 1-forms.  We use our construction to both recover (as a check) and extend the previous $\Z_n$-black-hole and FLRW models  \cite{ArgMa0,ArgMa}.  We then reinterpret this as an extension by $r$ to give a spatial QRG for the remaining sections. 

The QRG of the `discrete circle' $\Z_n$ for $n\ne 4$ is in fact the same as that of the lattice line $\Z$ but just has vertex labels $i$ taken modulo $n$, so what follows actually applies to either\cite{Ma:haw,ArgMa0}. These are both Cayley graphs meaning that, using the group structure, there is a natural (but not central) basis of 1-forms. In our case there are two, $e^\pm$  consisting in principle of the sum all increasing (resp. decreasing) arrows in the case of $\Z$ and clockwise resp. anticlockwise in the case of $\Z_n$. In practice, we just define $\Omega^1=A.\{e^\pm\}$ with bimodule relations and $\extd$,
\[ e^\pm f=R_\pm(f)e^\pm,\quad \extd f= \del_\pm(f) e^\pm,\quad  R_\pm(f)(i)=f(i\pm 1),\quad \del_\pm=R_\pm-\id.\]
for any function $f\in A$. The exterior algebra is the Grassmann algebra on $e^\pm$, $\extd e^\pm=0$ and the $*$-structure is pointwise complex conjugation of functions and $(e^\pm)^*=-e^{\mp}$.  The metric edge values are viewed as a real-valued non-vanishing coefficient function $g$ of the source vertices whereby
\[ \cg=g e^+\tens e^-+ R_-(g) e^-\tens e^+.\]
The shift $R_-(g)$ is needed for edge-symmetry. Then there is a unique QLC\cite{Ma:haw,ArgMa0},
\[ \nabla e^\pm=   (1-\rho_\pm)e^\pm\tens e^\pm,\quad \sigma(e^\pm\tens e^\pm)=\rho_\pm e^\pm\tens e^\pm,\quad \sigma(e^\pm\tens e^\mp)=e^\mp\tens e^\pm,\]
where $\rho_+=R_+(g)/g$, $\rho_-=R_-(R_-(g)/g))$ in a compact notation. A recent application as well as some results about bimodule connections on Cayley graphs generally are in \cite{BegMa:gra}. 

\subsubsection{Solutions for $\R\times\Z_n$ as spacetime} First, we extend the above  $\Z$ or $\Z_n$ QRG  by $t$ as a Type I solution. Thus, we let $\cg(t)$ with coefficient function $g=g(t)$ and desired `lapse' $N(t)$ depending on time and set  
\[ \tilde \cg=\cg(t)+N(t)\extd t\tens \extd t.\]
The general approach in Section~\ref{secmain} still constructs a QLC but with constraints on the time-dependence of $\cg$ even in this Type I case.  To do this, we look for the allowed $\Xi$. The kernel of $\wedge$ is spanned by $e^\pm\tens e^\pm$ and $e^+\tens_S e^-$ and as $\Xi$ needs to be central we can only have
\[ \Xi= \chi(t) e^+\tens_S e^-\]
for some real-valued time dependent $\chi(t)\in A$ such that
\[ \dot\cg=2 N(t)\chi(t) e^+\tens_S e^-.\]
Hence, evolution will break edge-symmetry unless $N\chi$ is spatially constant.  Put another way, the class of metric evolutions $\cg(t)$ which we can accommodate are of the form
\[ \cg(t)=\cg(0) + G(t)\cg_{Euc},\quad {i.e.,}\quad g(t,i)=g(0,i)+G(t);\quad  G\in C^\infty(\R),\]
\[ \cg_{Euc}=e^+\tens_S e^-;\quad \chi(t)={\dot G(t)\over 2 N(t)}.\]
Thus, for this spatial QRG, we cannot accommodate independently evolving metrics at each vertex $i$. Rather, whatever function of the vertices $g(0,i)$ is, the graph of that against $i$ just moves up or down uniformly in $i$ (but freely in time) by an amount $G(t)$. We then set $\chi$ to solve (\ref{g9}) subject to $\wedge(\Xi)=0$. 

Next solving for $\alpha$ in (\ref{g3}), we are led uniquely to 
\[ \alpha(e^\pm)=-{\dot G(t)\over 2 g_\pm(t)} e^\pm;\quad g_+:=g(t),\quad g_-:=R_-(g(t)).\]
Finally, since $C$ has to be a central 1-form, it can only be zero. Hence, (\ref{g5}) is solved by $\extd N=0$, so in fact $N(t)$ is also forced to be constant in $i$. Thus,  the rest of the QLC is given by
\[ \gamma=C=0,\quad E=-{\dot N\over 2 N}.\]
Equivalently,
\[ \tilde\nabla e^\pm=\nabla e^\pm- {\dot G(t)\over 2 g_\pm(t)}e^\pm\tens_S\extd t,\quad\tilde\nabla\extd t= {\dot G(t)\over 2 N(t)}e^+\tens_S e^- -{\dot N\over 2N}\extd t\tens \extd t.\]
A special case up to some different conventions is the $\R\times \Z_n$ FLRW model in \cite[Thm.4.1]{ArgMa0} where $i$ is mod $n$ and $N=1$ was also constant in time so there was no $\dot N$ term in the connection. Precisely the above $g(0,i)+G(t)$ structure of the allowed metric coefficient functions was an unusual constraint, and we see now how it arises as an example of a more general construction for any $N(t)$. 

\subsubsection{The same solution as a spatial QRG by adjoining $r$} \label{spatialZn}

We now take exactly the Type I solution above but with $t$ replaced by $r$ for a spatial QRG in polar coordinates, with $\Z_n$ for the angular geometry. We now use a funciton $H(r)$ in place of $N(t)$ above and $\delta(r)$ in place of $G(t)$ above. We will drop $\tilde{\ }$ since this is going to be a spatial QRG for later extension to spacetime and accordingly write $\nabla^{\Z_n}e^\pm$ for the initial one that we used before on $\Z_n$.  Then, our result above immediately gives a QRG on $\C(\Z_n)$ with (graded)commuting $r,\extd r$ adjoined as 
\[ \cg= g_+ e^+\tens e^-+ g_- e^-\tens e^++ H(r)\extd r\tens\extd r;\quad  g_\pm(r,i)= g_\pm(0,i)+\delta(r)\]
\[  \nabla e^\pm=\nabla^{\Z_n} e^\pm- {\delta'(r)\over 2 g_\pm(r)}e^\pm\tens_S\extd r,\quad\nabla\extd r= {\delta'(r)\over 2 H(r)} e^+\tens_S e^--{H'(r)\over 2H(r)}\extd r\tens \extd r.\]
As a check, this recovers exactly the spatial sector of the general analysis for static metrics with $\Z_n$ angular geometry in \cite[Thm.~5.1]{ArgMa} (i.e., ignoring the $t,\extd t$ terms there for now). The $\alpha_\pm$ in that work correspond to our $g_\pm$ and $\del_r\alpha_\pm$ to our $\delta'(r)$. We also correct  a typo in \cite[Thm.~5.1]{ArgMa} in the statement of  $\nabla e^\pm$ there (but the actual result obtained in the proof (the coefficients $\gamma_\pm$) exactly match the answer obtained now). This serves as a check on our new approach and also provides the spatial QRG for the next sections. 

\subsection{Type I spacetimes with the above spatial QRG} \label{typeIZn}

We now take the spatial QRG from Section~\ref{spatialZn} and extend {\em it} by a further $t$ using the general scheme of Section~\ref{secmain}. This time, there is one central basis vector in the spatial geometry namely $\extd r$ so we are not forced to $C=0$; we can take $N=N(t,r)$ and also have more options of $\Xi$. There appear to be plenty of solutions but as an easy special case, we can take $N=-\beta(r)$  and $\cg$ static, i.e both time independent so that we are constructing a static metric. Then we have an immediate solution of (\ref{type12})-(\ref{type1b}) etc., namely
\[ \Xi=\alpha=E=0,\quad \gamma(g^1)g^2={\beta'(r)\over 2}\extd r,\quad C=-{\beta'(r)\over 2\beta(r)}\extd r\]
which imply 
\[ \tilde\nabla\extd r=\nabla \extd r - {\beta'(r)\over 2H(r)}\extd t\tens \extd t,\quad \tilde\nabla\extd t= -{\beta'(r)\over 2\beta(r)}\extd r\tens_S\extd t,\]
where we use $\gamma(\extd r)=\gamma(g^1)(g^2,\extd r)$ using $(\ ,\ )$ for the spatial QRG, which brings in $1/H(r)$. Comparing, we see that this easy solution recovers  gives exactly the full extent of \cite[Thm.~5.1]{ArgMa} (correcting the typo) as well as the full $\Z_n$-black-hole in that work as a special case. 
On the other hand, we are not restricted to static metrics as we show with an  example.  

\begin{example} \label{FLRW} ($\Z_n$-FLRW model.) We use the spatial QRG construction as above with $H(r)=a^2/(1-k r^2)$,  $\delta(r)=-a^2 r^2$ and $g_\pm(0,i)=0$, giving
\[ \cg=a^2({1\over 1-k r^2}\extd r \tens \extd r-  r^2 e^+\tens_S e^-),\]
\[\nabla e^\pm=-{1\over r}\extd r\tens_S e^\pm,\quad \nabla\extd r=-{k r\over 1-kr^2}\extd r\tens\extd r-{r(1-k r^2)}e^+\tens_Se^-,\]
where $a$ is a constant and does not affect the QLC. We now extend this spatial QRG by $t$ using our analysis, letting $a=a(t)$ vary with time and $N=-1$ so that
\[\tilde g= -\extd t\tens\extd t+ a^2(t)({1\over 1-k r^2}\extd r \tens \extd r-  r^2 e^+\tens_S e^-)\]
is a discrete FLRW model with the sphere at each $t,r$ replaced by a flat discrete circle. To find the spacetime QLC, we set 
\[ \Xi=-{\dot a\over a }\cg,\quad \alpha={\dot a\over a}\id,\quad\gamma=C=E=0\]
to solve (\ref{type1a})-(\ref{type1b}) etc.  The resulting spacetime QLC is 
\[ \tilde\nabla\extd r= \nabla\extd r+ {\dot a\over a}\extd r\tens_S\extd t,\quad \tilde\nabla e^\pm=\nabla e^\pm+{\dot a\over a}e^\pm\tens_S \extd t,\quad  \tilde\nabla\extd t=- {\dot a\over a }\cg.\]
\end{example}
This completes the `square' of models where we have a black-hole or an FLRW $r,t$ sector and for the angular sector a fuzzy sphere or $\Z_n$, the other three being in \cite{ArgMa}. The connection coefficients in all four are not too different from their classical values possibly with a `dimension jump', because although they are
noncommutative in different ways, the rotationally symmetric metrics considered 
mean that this aspect does not show up in the QRG itself. It would, however, show up for the behaviour of fields that are not rotationally invariant moving on the
quantum spacetime.

\subsection{Type II spacetimes with the above spatial QRG} \label{typeIIZn}

We now continue with the spatial QRG of Section~\ref{spatialZn} but look for Type II solutions where we adjoin $t,\extd t$ with $\Xi=0$. To keeps things simple, we will assume that
the metric is always constant going around the $\Z_n$, so $g_\pm(0,i)=0$ and $g_\pm=\delta(t,r)$. We also assume that the rest of the geometry is also rotationally invariant, so  $N=N(t,r)$ and $\xi= \xi(t,r)\extd r$ where there is only one central 1-form $\extd r$ so only this component of $\xi$, and we are allowing everything to depend on time.  So the spacetime metric is 
\begin{equation}\label{tgZn} \tilde \cg=N(t,r)\extd t\tens\extd t+\xi(t,r)\extd r\tens_S\extd t+  H(t,r)\extd r\tens\extd r+  \delta(t,r)e^+\tens_S e^-\end{equation}
for some functions $N,\xi,H,\delta$ of two variables. 

We want to find a QLC for this and we look for one of Type 2 with spatial metric the last two terms above at each $t$ and at each $t$ the spatial QLC from Section~\ref{spatialZn} which we write in a short form as follows. Since we took the constant metric on $\Z_n$ we have $\nabla^{\Z_m}e^\pm=0$ and
\[ \nabla e^\pm=- A(t,r)e^\pm\tens_S \extd r,\quad \nabla \extd r=B(t,r) e^+\tens_Se^-- D(t,r)\extd r\tens\extd r,\]
\[ A(t,r)={\delta'(t,r)\over 2\delta(t,r)},\quad B(t,r)={\delta'(t,r)\over 2 H(t,r)},\quad D(t,r)={H'(t,r)\over 2 H(t,r)}.\]
We are also going to stop writing the $t,r$ dependencies, but these should be understood with radial derivatives given by a prime and time ones by dot, and to compensate we underline the 1-forms. Then
\[\nabla\underline{\xi}=(\xi'-\xi  D )\extd r\tens\extd r+\xi  B  e^+\tens_Se^-\]
and $\sigma$ acts as the identity on $e^+\tens_S e^-$ and on $\extd t\tens\extd t$ which dictates $\dot \cg$ to solve (\ref{g9}) according to
\[ \dot H= 2 ( \xi' -\xi D),\quad  \dot\delta=2\xi  B,\]
or explicitly
\begin{equation}\label{HevolZn} \dot H+ {H'\xi\over H}=2\xi',\quad \dot\delta-{\delta'\xi\over H}=0.  \end{equation}
We also set $\underline{C}=C \extd r$ for a rotational
invariant connection and ditto will look for $E=E(r)$  (all depending on $t$). If we take $C$ of this form then (\ref{g2})-(\ref{g3}) are the same and solved by
\[\alpha(\extd r)={(D -C )\xi  -\xi' \over H }\extd r,\quad \alpha(e^\pm)=- {\xi  B \over\delta }\]
if we know $C $. But (\ref{g5}) then tells us that 
\[ C n ={N' \over 2 }+{\xi  \over H }(D \xi  -\xi'  );\quad n ={\xi^2\over H }-N \]
which solves for $C $ provided $n\ne 0$. Putting in the form of $D,B$, the final result is
\begin{equation}\label{CsolZn} C =-{n' \over 2 n },\quad \alpha(\extd r)={1\over H}\left({n'\xi \over 2 n   }+ {H'\xi \over 2 H}-\xi'\right)\extd r,\quad \alpha(e^\pm)=-{\xi \delta' \over 2 H \delta } e^\pm. \end{equation}

Similarly, we solve the first of (\ref{g6})-(\ref{g7}) after cancelling $\alpha(\xi)$ via (\ref{g5}), by
\[ \gamma(\extd r)={1\over H }(-\dot\xi +{N' \over 2}-E \xi ),\quad \gamma(e^\pm)=0  \]
if we know $E $ but the last of (\ref{g8}) then tells us that
\[ E  n ={\dot N \over 2}+{\xi ( N' -2 \dot\xi )\over 2 H }\]
which now solves for $E $. These can be simplified to
\begin{equation}\label{EsolZn} E= -{1\over 2n}\left(\dot{n}+ {n'\xi\over H}\right),\quad  \gamma(\extd r)={1\over 2 nH }\left(2 N \dot\xi -\dot N\xi  -N N' \right),\quad \gamma(e^\pm)=0. \end{equation}

The bottom line is that we can choose $\xi ,N $ freely at all times but need $H,\delta$ to evolve according to (\ref{HevolZn}), then we have solved for the spacetime QLC.  

\subsection{General solution for $\Z_n$ rotationally invariant metrics} 

Alternatively,  if we are not interesting in this natural constraint on the  evolution in the quantum metric, we can just keep an arbitrary rotationally invariant form (\ref{tgZn}), and set 
\begin{equation}\label{XisolZn} \Xi={1\over 2 N}\left((\dot H+ {H'\xi\over  H} -2\xi')\extd r\tens \extd r+ ({\dot \delta}-{\delta'\xi \over H})e^+\tens_Se^-\right) \end{equation}
and then we can more directly solve (\ref{g3}) replacing $N\Xi+\nabla\xi=\dot\cg/2$ to get an equation for $\alpha, C$. Then (\ref{g5}) provides another such equation, with joint solution
\begin{equation}\label{CgenZn} C={1\over 2 n}\left( N'- { \dot H \xi\over H}\right),\quad\alpha(\extd r)={1\over 2 nH}\left(N\dot H- N'\xi\right),\quad \alpha(e^\pm)=-{\dot\delta\over 2\delta}.  \end{equation}
Similarly, we then solve (\ref{g6}) and (\ref{g8}) jointly to get $\gamma$ as in (\ref{EsolZn}) unchanged and 
\begin{equation}\label{EgenZn} E={1\over 2nH}\left(\dot N H- 2\xi\dot\xi+N'\xi\right). \end{equation}
One can can check that this reduces to the previous section when (\ref{HevolZn}) holds. This then provides the spacetime QLC for any rotationally invariant metric (\ref{tgZn}) with $\Z_n$ angular geometry.

\section{Conclusions}\label{seccon}

We have provided a framework for the construction of time-sliced quantum Riemannian geometries (QRG) where the spatial slices are a fixed differential exterior algebra $(A,\Omega,\extd)$ but with spatial metrics, shift 1-forms and lapse functions on $A$ that are allowed to vary in time. We limited ourselves to $t,\extd t$ classical and (graded) commuting with the spatial sector. We provided a general route to finding the spacetime quantum Levi-Civita connection (QLC) and then solved it completely in the case of a central basis of $\Omega^1$ and some assumptions in the exterior algebra and the braiding $\sigma$ of the spatial QLC. This class includes fuzzy spheres, but we also showed that the scheme gives useful solutions for discrete space QRGs where there may be no central 1-forms, focussing on the discrete circle $\Z_n$. If,  as is plausible, Planck scale noncommutativity of spacetime is induced by quantum gravity effects, this would impact Hamiltonian quantization of any quantum field theory via a time-slice approach, including quantum gravity itself. It remains to be considered how much, if any, of the ADM formalism and the Wheeler-De-Witt equation can still be achieved on a highly noncommutative spacetime and without (yet) a full understanding of variational calculus in noncommutative geometry. Notably, comparing $\dot\cg$ in  Proposition~\ref{prop:gen} or  (\ref{XiZ}) with the usual 
\[ \dot g_{ij}= {2N\over\sqrt{g}}(\pi_{ij}-g_{ij}{{\rm Tr}(\pi)\over 2})+\hat\nabla_iN_j+\hat\nabla_j N_i\]
for momentum $\pi_{ij}$ in the ADM formalism, tells us that $\Xi_{ij}$ should be identified up to a sign and scaling by a measure with this combination of $\pi_{ij}$ and its trace, and in geometric terms be closely related to the extrinsic curvature. Eventually, Hamiltonian quantisation on quantum spacetimes, including canonical quantum gravity adapted to this case, should tie up with functional integral approaches. This is a key topic for further work. 

Meanwhile, our results provide a plentiful supply of quantum spacetimes constructed in this way, complete with QLCs, on which known constructions for curvature, Laplacians, Dirac operators etc. could be computed and various ideas for physics on  quantum spacetimes explored. We recovered the fuzzy FLRW, fuzzy black-hole and $\Z_n$ black-hole from \cite{ArgMa} as easy applications of the formalism (as opposed to rather hard direct calculations as previously) and provided now a $\Z_n$-FLRW model with a circle at each $r,t$. Our work can also be seen as a special case of the Kaluza-Klein construction in \cite{LiuMa} with spacetime just $\R$, but going beyond the fuzzy sphere case there and with a different interpretation of $A$ as an internal geometry. Iterating the construction in Section~\ref{secZ} in principle to further spatial variables is then on a par with the Kaluza-Klein result.  In terms of concrete applications, within the family of quantum spacetime models that we can now construct by the results here, it should be possible to include a fuzzy or discrete version of the Oppenheimer-Snyder metric for black-hole collapse\cite{Opp} for example. This will be looked at elsewhere. 

In further work, it would be interesting to apply the time-slicing formalism to other spatial QRGs. One limitation in the $\Z_n$ case was that we were a bit limited if we wanted metrics at different places on the circle to evolve independently. This can be traced back to the
$\wedge(\Xi)=0$ condition needed for zero torsion not playing well with the fact that the natural metrics on graphs such as $\Z_n$ are edge-symmetric not $\wedge$-symmetric. It is possible to solve the spatial QRG here more generally, see \cite{Ma:haw} for a class of solutions with $\wedge$-symmetry, or \cite{Sit} for a class of QRGs for other metrics on $\Z_n$ including non edge-symmetric ones. An alternative could be to allow a degree of torsion by dropping or modifying $\wedge(\Xi)=0$ so as to allow 
general metric evolution. It should also be of  interest to take $\Z_2\times\Z_2$ with its QRG from \cite{Ma:squ}, where the QRG for a general (edge-symmetric) metric is known and these are also $\wedge$-symmetric, so this should be less restrictive. Another class of examples that could be considered is $A=M_2(\C)$ with calculus from \cite[Ex.~1.3.7]{BegMa} very different from that of a fuzzy sphere. Here, $\Omega$ is not  a Grassmann algebra and $\sigma$ for the known QLCs is not typically a flip, hence this is a rich case that goes beyond Section~\ref{seccen} but still has the trivial centre and central basis features which made that case more easily computable. 

Finally, we would like to be able to have nontrivial commutation relations between $t$ and the spatial sector. An example that has been studied is when $[a,t]=\lambda X(a)$ for a vector field $X$ and $a\in C^\infty(M)$. In this case a spacetime calculus and wave operator, but not a QRG in the modern sense, was found with $M$ the spatial geometry of a black-hole\cite{Ma:alm}, which already tells us that a general theorem for $t$ non-central case, while desirable, will be difficult if we want an actual QRG. We could also could equip $t$ with a nonstandard differential calculus or replace $C^\infty(\R)$ by another QRG entirely, such as functions on a lattice line, with or without $t$ central. Even the QLC for a general metric on $\Z\times \Z$ (where $t$ is now a commuting discrete variable) does not appear to exist, possibly for similar restrictions as in the present work in the case of discrete space.  Understanding the full extent of the obstruction here and resolving it is another important direction for further work. 

\section*{Acknowledgements} I thank Edwin Beggs for some initial discussions. The project was supported by a Leverhulme Trust project grant RPG-2024-177.

\section*{Declarations}


\medskip
\noindent{\bf Data availability:} Data sharing is not applicable as no data sets were generated or analysed during the current 
study.

\medskip
\noindent{\bf Conflict of Interest:} The author has no competing  interests to declare that are relevant to the content of this article.


\begin{thebibliography}{99}

\bibitem{ArgMa0} J. Argota-Quiroz and S. Majid, Quantum gravity on polygons and $\R \times \Z_n$ FLRW model, Class. Quantum Grav. 37 (2020) 245001 (43pp)

\bibitem{ArgMa} J. Argota-Quiroz and S. Majid, Fuzzy and discrete black hole models, Class. Quantum Grav. 38 (2021) 145020 (36pp)

\bibitem{ADM}R. Arnowitt, S. Deser and C.W. Misner, in {\em Gravitation: An Introduction to Current Research}, ed. L. Witten, (Wiley, 1962); arXiv:0405109 (gr-qc)


\bibitem{BegMa}
E. J. Beggs and S. Majid, \textit{Quantum Riemannian Geometry}, Grundlehren der mathematischen Wissenschaften, vol. 355, Springer, 2020

\bibitem{BegMa:gra} E. Beggs and S. Majid, Quantum geodesic flows on graphs, Lett. Math. Phys. (2024) 114:112 (41pp)

\bibitem{BliMa} S. Blitz and S. Majid, Quantum curvature fluctuations and the cosmological constant in a single plaquette quantum gravity model, Class. Quantum Grav. Lett. 42 (2025) 04LT01 (12pp)
 
\bibitem{Sit}A. Bochniak, A. Sitarz and P. Zalecki, Riemannian geometry of a discretized circle and torus, SIGMA 16 (2020), 143 (28pp)


\bibitem{DFR}S. Doplicher, K. Fredenhagen and J. E. Roberts, The quantum structure of spacetime at the Planck scale and quantum fields, Commun. Math. Phys. 172 (1995) 187--220

\bibitem{Will}H. Hamber and R. Williams, Discrete Wheeler-De-Witt equation, Phys. Rev. D 84 (2011) 104033

\bibitem{Hoo}G.'t Hooft, Quantization of point particles in 2+1 dimensional gravity and space-time
discreteness, Class. Quant. Grav. 13 (1996) 1023

\bibitem{LirMa}E. Lira-Torres and S. Majid, Quantum gravity and Riemannian geometry on the fuzzy sphere, Lett. Math. Phys. (2021) 111:29 (21pp)

\bibitem{LiuMa} C. Liu and S. Majid, Yang-Mills fields from fuzzy sphere quantum Kaluza-Klein model, J. High Energ. Phys. 07 (2024) 195

\bibitem{LiuMa1} C. Liu and S. Majid, Kaluza-Klein ansatz from Lorentzian quantum gravity on the fuzzy sphere, Euro. Phys. Jour. C 85  (2025) 1464 (12pp)

\bibitem{Mad}J. Madore, The fuzzy sphere. Class. Quantum Grav. 9 (1992)  69--88

\bibitem{Ma:pla} S. Majid, Hopf algebras for physics at the Planck scale, Class. Quant. Grav. 5 (1988) 1587--1607

\bibitem{Ma:alm} S. Majid, Almost commutative Riemannian geometry: wave operators, Commun. Math. Phys. 310 (2012) 569--609

\bibitem{Ma:squ} S. Majid, Quantum gravity on a square graph, Class. Quantum Grav 36 (2019) 245009 (23pp) 

\bibitem{Ma:haw} S. Majid, Quantum Riemannian geometry and particle creation on the integer line, Class. Quantum Grav 36 (2019) 135011 (22pp) 

\bibitem{Ma:are}S. Majid, Quantum gravity:are we there yet? Phil. Trans. Roy. Soc. A 383 (2025) 20230377 (20pp)

\bibitem{MaRue} S. Majid and H. Ruegg, Bicrossproduct structure of the $\kappa$-Poincar\'e group and non-
commutative geometry, Phys. Lett. B. 334 (1994) 348--354

\bibitem{MaSim}S. Majid and F. Simao, Quantum variational calculus on a lattice, arXiv: 2508.02628 (hep-th)

\bibitem {Opp}J.R. Oppenheimer and H. Snyder, On continued gravitational contraction,  Phys. Rev. 56 (1939) 455--459


\bibitem{Sny}H.S. Snyder, Quantized space-time, Phys. Rev. 71 (1947) 38--41



\end{thebibliography}
\end{document}